\documentclass[lettersize,journal]{IEEEtran}
\usepackage{cite}
\usepackage{amsmath,amsfonts,amsthm}
\usepackage{algorithm}
\usepackage{algorithmic}

\usepackage{graphicx}
\usepackage{textcomp}
\usepackage{xcolor}
\usepackage{tabularx}
\usepackage{comment}
\usepackage{bm}
\usepackage[caption=false,font=footnotesize]{subfig}

\newtheorem{theorem}{Theorem}

\newcommand{\etal}{\textit{et~al}. }

\begin{document}

\title{Online Adaptive Computation Reuse in Collaborative Edge Computing:\\ A Two-Timescale Approach}

\author{Xingqiu~He, Chaoqun~You, Peng~Wang, Xiaoyu~Liu, and Tony~Q.~S.~Quek,~\IEEEmembership{Fellow,~IEEE}% 
  \thanks{X. He and C. You are with the Institute of Space Internet, Fudan University, Shanghai, China (emails: hexqiu@gmail.com, chaoqunyou@gmail.com).}
  \thanks{P. Wang is with the Information System Technology
and Design Pillar, Singapore University of Technology and Design, Singapore (e-mail: peng\_wang2@sutd.edu.sg).}
  \thanks{X. Liu is with the School of Communications and Information Engineering, Nanjing University of Posts and Telecommunications, Nanjing 210003, China (e-mail: xiaoyu.liu@njupt.edu.cn).}
  \thanks{T. Q. S. Quek is with the Singapore University of Technology and Design,
Singapore 487372, and also with the Yonsei Frontier Lab, Yonsei University,
South Korea (e-mail: tonyquek@sutd.edu.sg).}
}

% The paper headers
\markboth{Journal of \LaTeX\ Class Files,~Vol.~14, No.~8, August~2021}%
{Shell \MakeLowercase{\textit{et al.}}: A Sample Article Using IEEEtran.cls for IEEE Journals}

% \IEEEpubid{0000--0000/00\$00.00~\copyright~2021 IEEE}
% Remember, if you use this you must call \IEEEpubidadjcol in the second
% column for its text to clear the IEEEpubid mark.

\maketitle

% Computer Society journals place the abstract and keywords in the title area.
\begin{abstract}
Collaborative Edge Computing (CEC) is an efficient computing paradigm that enables neighboring edge servers to share computational resources with each other.
Although CEC can enhance resource utilization, it still suffers from duplicate computations because nearby end-users often offload tasks with similar inputs.
To improve system efficiency, the computation results of previously executed tasks can be cached and reused by subsequent tasks.
However, time-varying task popularity and arrival rates require caching and scheduling decisions to adapt to demand changes, while frequent cache updates incur additional costs.
To address this issue, this paper develops a two-timescale computation reuse algorithm for CEC networks.
We formulate an optimization problem that jointly considers weighted response time and cache update cost, with result caching decisions determined at the frame level and workload scheduling, cache searching, and computational resource allocation adjusted at the slot level.
Using recent workload observations, we construct a frame-level surrogate problem and decompose it into a caching subproblem and a scheduling subproblem.
For the caching subproblem, we introduce marginal storage efficiency and incorporate cache update costs into a bisection-based algorithm.
Under bounded normalized marginal sensitivity, the algorithm achieves a near-optimal objective value for the single-BS caching subproblem when individual result sizes are small relative to the cache capacity and the relaxed solution is sufficiently accurate.
For the scheduling subproblem, we utilize projected gradient descent and backtracking with warm starts across consecutive slots.
Numerical results demonstrate consistent performance gains over benchmark schemes
across diverse network and workload settings.
\end{abstract}

\begin{IEEEkeywords}
Computation Reuse, Collaborative Edge Computing, Collaborative Caching, Scheduling Algorithm
\end{IEEEkeywords}

\section{Introduction}
In the past decade, the rapid evolution of machine learning and the Internet of Things has driven the
development of many computation-intensive applications, such as AR/VR~\cite{qiao2019web} and mobile gaming~\cite{lin2016cloudfog}.
To fulfill the stringent Quality of Service (QoS) requirements of these applications,
edge computing is regarded as a promising computing paradigm that provides computation-offloading services in proximity to end-users.
Nevertheless, due to limited resources at edge servers,
an edge computing system may suffer from severe performance degradation when dealing with bursty workloads~\cite{he2021providing}.
To address this issue, Collaborative Edge Computing (CEC)~\cite{tran2017collaborative}
has been proposed to alleviate load imbalance by offloading computation tasks from busy edge servers to idle ones.
Extensive studies~\cite{xiao2017qoe, chen2018computation, li2019learning, lyu2018distributed} have shown that CEC effectively improves the utilization of computational resources and thus
enhances system performance.

Despite the significant improvement CEC brings to resource utilization, computational resources can still be wasted due to duplicate computations.
As shown in~\cite{al2022promise}, clustered end-users are likely to offload similar computation tasks to edge servers due to temporal, spatial, or semantic correlations among input data.
For example, end-users in AR/VR may request the rendering of the same frame if they have similar poses~\cite{meng2020coterie},
and intelligent surveillance systems often need to analyze similar images captured within a short period of time~\cite{al2022promise}.
Based on this observation, \textit{computation reuse}~\cite{guo2018potluck, guo2018foggycache} has been proposed to further reduce resource waste.
The main idea is that the computation results of previously processed tasks can be cached and reused if subsequent tasks have similar inputs.
This mechanism not only avoids duplicate computations, but also reduces response time, as tasks with cached results can be completed immediately without being processed from scratch.

Computation reuse in CEC networks enables tasks at one edge server to leverage results cached at other servers, 
thereby enlarging the set of reusable results and reducing redundant caching~\cite{mastorakis2020icedge, al2022reservoir, al2024deduplicator}. 
The effectiveness of such collaboration critically depends on how well caching decisions and workload scheduling match the heterogeneous demands across edge servers. 
In practical CEC networks, however, task popularity and arrival rates are inherently time-varying, 
rendering caching and scheduling decisions optimized for the current demand profile less effective as the workload evolves. 
Adapting caching decisions to these changes incurs non-negligible communication and storage I/O overhead, making frequent reconfiguration undesirable.
In contrast, workload scheduling must react promptly to short-term traffic fluctuations.
This disparity in adaptation timescales motivates an online framework that jointly optimizes caching decisions and workload scheduling at different timescales.

Despite the aforementioned benefits, implementing computation reuse in CEC networks under dynamic demand is particularly challenging for three reasons.
First, compared to computation reuse at a single edge server, the mutual influence among caching decisions at different edge servers is hard to quantify.
Second, compared to conventional CEC networks, computation reuse changes the distribution of task processing time, thereby making caching and scheduling decisions deeply intertwined and difficult to optimize.
Third, caching decisions must balance the benefits of adapting to evolving demands against the associated reconfiguration costs, while making decisions online without knowledge of future workloads.

In this paper, we propose a two-timescale computation reuse algorithm for CEC networks.
Our objective is to simultaneously minimize the long-term weighted response time and cache update cost by jointly optimizing result caching, cache searching, workload scheduling, and computational power allocation.
The formulated problem is a mixed integer nonlinear programming problem and is proved to be NP-hard even for a time-invariant, single-server, single-application special case.
In order to address workload uncertainty, we construct a frame-level surrogate problem using recently observed workloads and apply alternating minimization to decompose it into a caching subproblem and a scheduling subproblem.
We incorporate the marginal update cost into the storage efficiency used to optimize each relaxed caching subproblem.
The resulting decisions are converted to binary candidates, which are accepted only when they reduce the overall surrogate objective.
Within each frame, the caching decisions remain fixed, while the scheduling decisions are dynamically adjusted to the task arrival rates.

% This paper extends our INFOCOM conference study~\cite{he2024exploiting}, which considered joint result caching and scheduling under time-invariant workloads.
% The journal version introduces time-varying demand and cache update costs, extends the caching and scheduling algorithms to operate on two timescales, and evaluates their adaptation to workload changes.
% It also makes the conditions for the caching analysis explicit and establishes a rounding-loss bound for the single-BS caching subproblem.
The main contributions are summarized as follows.
\begin{itemize}
    \item We formulate a two-timescale computation reuse problem that jointly accounts for response time and cache update cost in CEC networks.
        The model captures the coupling between frame-level result caching decisions and slot-level scheduling through cache-search overhead and queueing delay.
        We construct a frame-level surrogate problem from recent workload observations and decompose it into caching and scheduling subproblems.
    \item For the caching subproblem, we introduce marginal storage efficiency and combine its average over recent workload samples with the marginal update cost to develop a bisection-based algorithm.
        Our theoretical analysis shows that, under bounded normalized marginal sensitivity, the algorithm achieves a near-optimal objective value for the single-BS caching subproblem when individual result sizes are small relative to the cache capacity and the relaxed solution is sufficiently accurate.
        An acceptance rule ensures that binary reconstruction and numerical approximation do not increase the frame-level objective.
    \item For the scheduling subproblem, we develop a projected gradient descent algorithm, in which variable substitution is employed to transform the feasible region into probability simplices, thereby enabling efficient projection. A backtracking line search is incorporated to accelerate convergence and ensure solution stability, while warm-start initialization leverages the solution from the preceding slot to reduce computational overhead. We further establish the convergence of the frame-level objective sequence and characterize the computational complexity of the proposed algorithms at both timescales.
    \item We conduct extensive simulations under time-varying workloads to evaluate the response time, cache update cost, convergence behavior, and runtime overhead.
    The evaluation compares our algorithm against its static counterpart and representative caching and scheduling baselines, with a look-ahead scheme included as a performance reference. 
    The results demonstrate consistent performance gains over existing schemes across a wide range of network and workload settings.
\end{itemize}

The rest of the paper is organized as follows.
In Section \ref{section:related_work}, we review related work.
Section \ref{section:system_model} introduces the system model and formulates the corresponding optimization problem.
Section \ref{section:algorithm_design} describes the two-timescale algorithm and its theoretical analysis.
The numerical results and conclusions are presented in Section \ref{section:simulation} and Section \ref{section:conclusion}, respectively.

\section{Related Work} \label{section:related_work}
\subsection{Scheduling and Caching in CEC Networks}
Workload scheduling and resource allocation in CEC networks have been extensively investigated to balance computational loads across edge servers and improve overall system performance. Early studies considered distributed optimization to jointly account for user experience and edge-server power efficiency~\cite{xiao2017qoe}, while online peer offloading was developed to minimize long-term latency subject to time-average energy constraints~\cite{chen2018computation}. Distributed optimization has also been employed to derive cost-effective computation offloading decisions~\cite{lyu2018distributed}. Beyond these objectives, deadline-aware scheduling schemes have been investigated to accommodate the stringent latency requirements of computation tasks~\cite{li2019learning, he2021providing}.

Complementary to workload scheduling, collaborative content caching improves resource utilization by exploiting storage resources distributed across edge servers. 
Existing studies have optimized content caching decisions by accounting for content popularity~\cite{ostovari2016efficient, wang2017multi}, spectrum efficiency~\cite{zhang2017cooperative}, as well as network topology and user demands~\cite{saputra2019novel}. More recent studies have jointly considered content caching, request routing, and load balancing in heterogeneous edge networks~\cite{peng2024cooperative}, while online caching schemes based on contextual bandits and federated learning have been developed to accommodate emerging content and facilitate cache sharing among edge servers~\cite{chen2024caching}. Unlike content caching, which primarily reduces the cost of retrieving existing content, computation reuse eliminates redundant task execution by caching and reusing previously computed results, thereby coupling caching decisions with computation scheduling.

\subsection{Computation Reuse}
Similarities among task inputs create opportunities to avoid redundant computation by identifying and sharing reusable results. Representative systems include Potluck~\cite{guo2018potluck} and FoggyCache~\cite{guo2018foggycache}, which exploit computation reuse across applications and devices, respectively. Subsequent studies have investigated cache management and reuse-aware task execution at the edge~\cite{lee2019case, bellal2021coxnet, al2022promise}. ICedge~\cite{mastorakis2020icedge} and Reservoir~\cite{al2022reservoir} leverage information-centric networking to enable computation reuse across distributed edge servers, while Deduplicator~\cite{al2024deduplicator} integrates similarity-aware request routing with load balancing to coordinate reuse across multiple servers. Collectively, these systems establish practical mechanisms for discovering, retrieving, and sharing reusable computation results and for directing requests to appropriate edge servers.

Complementing these system-oriented efforts, several studies have investigated the optimization of computation reuse in edge networks. Whispering~\cite{nour2021whispering} considers service offloading and computation reuse in cloud--edge networks, while LOAM~\cite{zhang2025loam} jointly optimizes request forwarding, data and result caching, and computation placement by accounting for congestion-dependent costs and elastic cache sizes.
Zhu \etal~\cite{zhu2026optimized} optimize data-block offloading, execution ordering, and result caching to reduce delay and caching costs in cooperative edge networks.
Our preliminary work~\cite{he2024exploiting} jointly optimized computation result caching and workload scheduling under time-invariant workloads.
To accommodate the workload evolution in practical situations, 
this paper develops an online two-timescale framework that coordinates frame-level result caching with slot-level cache searching, workload scheduling, and computational resource allocation.

\section{System Model and Problem Formulation} \label{section:system_model}

In this section, we describe the considered CEC system with computation reuse and formulate a two-timescale response time minimization problem.
Complexity analysis shows that the formulated problem is NP-hard even for a significantly simplified time-invariant special case.

\subsection{Computation Reuse Model}
\begin{figure}[t]
\centering
\includegraphics[width=0.4\textwidth]{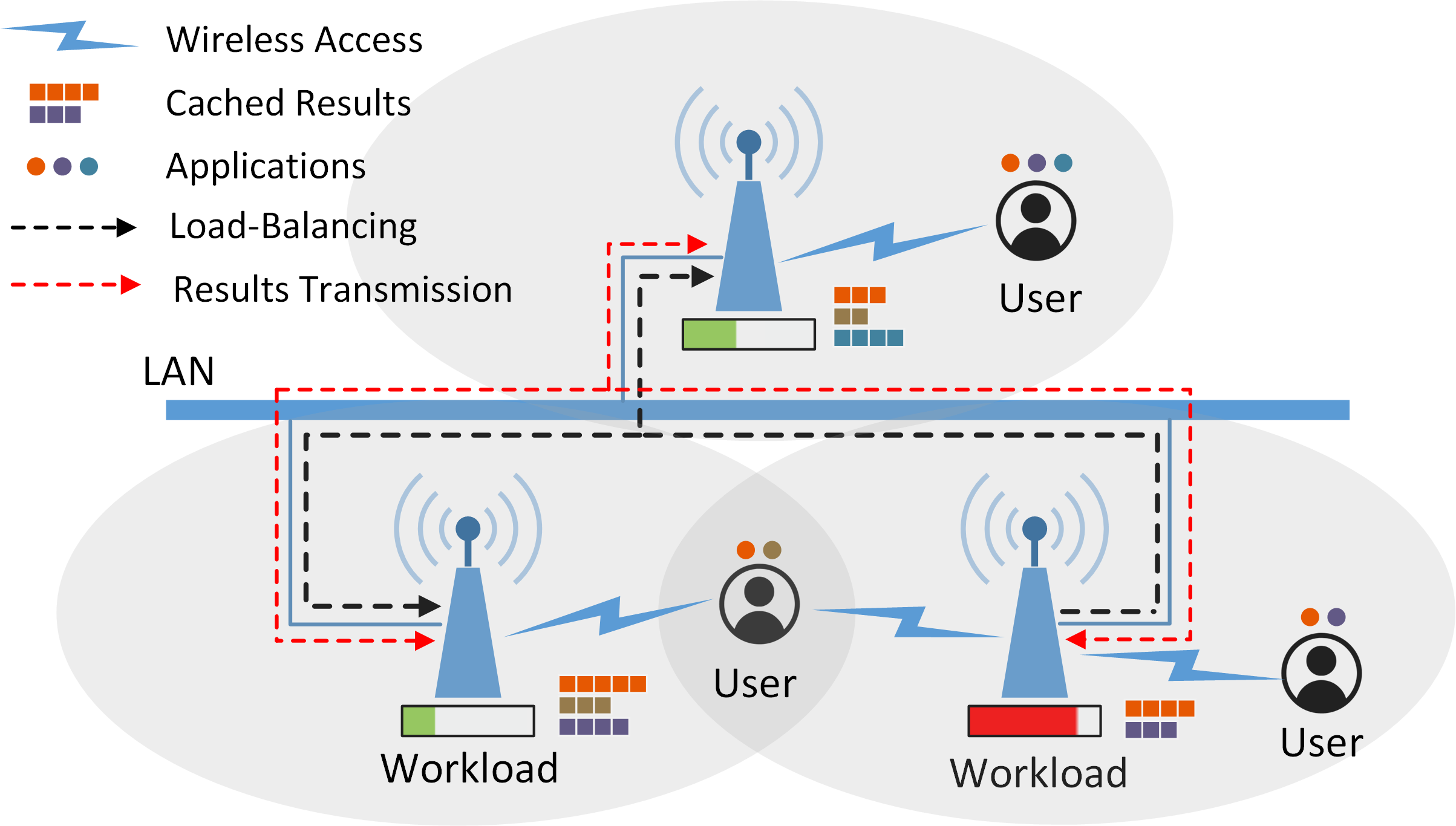}
\caption{A simple example of the considered system model.}
\label{fig:system}
\end{figure}
As shown in Fig.\ref{fig:system}, we consider a CEC network with $N$ BSs indexed by $\mathcal{N} = \{1, 2, \dots, N\}$. 
Each BS $n\in\mathcal{N}$ is co-located with an edge server with computational capacity $C_n^c$ (in CPU frequency) and storage capacity
% \footnote{In current practice, the computation results are usually stored in RAM so $C^s_n$ refers to the RAM space. But our system model and proposed algorithm also apply to disk storage situations.} 
$C_n^s$ (in GB), respectively.
Suppose there are $A$ different applications indexed by $\mathcal{A} = \{1, 2,\dots, A \}$ running in the CEC network. 
The end-users running these applications may offload computation tasks to nearby BSs with different input data.
%comes from multiple users in the network, and each application generates a sequence of tasks and then offloads them to one or more BSs in the same area to be processed. 
Taking an AR/VR application for example, the input data of each task indicate the user's pose, comprised of 3D position and 3D orientation.
The output of the task is the rendered frame to be displayed on the user's head-mounted displays.
%\textcolor{blue}{When a new task is offloaded to BS $n$, there are two choices: (i) the task uses the cached results on the edge server as its outputs (i.e., computation reuse), and (ii) the task is processed directly by the edge server without searching for the cached computation results.}

The system operates on two timescales. Specifically, the time horizon is divided into frames indexed by $r\in\mathcal{R}=\{1,2,\dots\}$,
and each frame consists of $T$ slots denoted by $\mathcal{T}_r=\{(r-1)T+1,(r-1)T+2,\dots,rT\}$.
At the beginning of each frame, the task popularity is estimated from recent observations, and the computation result caching decisions are made accordingly.
These caching decisions remain unchanged within the frame,
since updating the cached results usually incurs non-negligible communication and storage costs.
In contrast, the workload scheduling policy can be adjusted in every slot
according to the current task arrival rates.
We assume the task popularity remains approximately unchanged within a frame, while the task arrival rates remain approximately unchanged within a slot.
% This assumption captures the fact that the task popularity usually evolves much more slowly than the instantaneous workload.
For notational simplicity, we hereafter omit the corresponding index sets from universal quantifiers. Specifically, $\forall a,m,n,k,r,t$ is understood as $\forall a\in\mathcal{A},\,m,n\in\mathcal{N},\,k\in\mathcal{K}^a,\,r\in\mathcal{R},\,t\in\mathcal{T}_r$, with analogous interpretations applying when only a subset of these indices appears.

According to the results in \cite{al2022reservoir}, 
for a given application, computation tasks with sufficiently similar input data usually have the same outputs,
so the results of previous tasks can be reused by subsequent tasks. 
Specifically, for application $a\in\mathcal{A}$, suppose we have identified a set of typical inputs $\mathcal{K}^a = \{1,2,\dots, K^a\}$ from its historical data. 
If a new task of application $a$ is offloaded from end-users, we can first find the closest typical input using fast and space-efficient mechanisms, such as Locality Sensitive Hashing (LSH)~\cite{datar2004locality} and Feature Hashing (FH)~\cite{weinberger2009feature}. 
After that, we compute the similarity between the two inputs according to the chosen metric (e.g., structural similarity \cite{wang2004image} or cosine similarity \cite{singhal2001modern}). 
If the similarity exceeds a certain threshold and the output of the corresponding typical input has been cached, the computation result can be directly returned without processing the task from scratch.

Although computation reuse saves a significant amount of computational resources when the target result is cached, 
it also incurs extra computational costs due to operations like hashing and similarity calculation. 
Therefore, in some cases, it may be beneficial to directly process tasks without searching for the cached results, 
especially when the probability of successfully finding the target result is low. 
To allow for a flexible cache-searching scheme, we introduce a binary variable $y_n^a[t] \in \{0, 1\}$ to indicate whether 
we will search for a cached result for computation tasks from application $a$ at BS $n$ in slot $t$. 
If $y^a_n[t]=0$, then all tasks of application $a$ will be computed from scratch at BS $n$ in slot $t$, while $y^a_n[t]=1$ indicates otherwise.

Clearly, the optimal value of $y^a_n[t]$ depends on how often we successfully find the target result.
Suppose we are processing a task at BS $n$ in frame $r$, then we can define the \textit{local hit rate} $P_{lhr}^{a,n}[r]$ and \textit{neighboring hit rate} $P^{a,n}_{nhr}[r]$ as the probabilities that the desired computation result is cached at BS $n$ and the neighboring BSs, respectively.
For any typical input $k\in\mathcal{K}^a$, let $x^a_{nk}[r]\in\{0,1\}$ indicate whether the output of $k$ is cached at BS $n$ in frame $r$,
and let $p_k^a[r]$ denote the probability that a task from application $a$ is sufficiently similar to $k$ in frame $r$.
Then the local and neighboring hit rate can be expressed as $P_{lhr}^{a,n}[r] = \sum_{k\in\mathcal{K}^a} x_{nk}^a[r] p_k^a[r]$
and $P^{a,n}_{nhr}[r] = \sum_{k\in\mathcal{K}^a} p^a_k[r](1-x^a_{nk}[r])I_{>0}(\sum_{m\in\mathcal{N}, m\ne n}x^a_{mk}[r])$,
where $I_{>0}(\cdot)$ is the indicator function that returns $1$ if the input value is positive.
For convenience, we define $P^{a,n}_{hr}[r] = P^{a,n}_{lhr}[r] + P^{a,n}_{nhr}[r]$ as the aggregate hit rate.
Apparently, $P^{a,n}_{hr}[r]$ is the hit rate of application $a$ in the whole system and remains the same for different BS $n$.
Therefore, we can drop the superscript $n$ and use $P^{a}_{hr}[r]$ afterward.

Although it is desirable to enhance the hit rate as much as possible, the number of cached results is constrained by the storage capacities of BSs.
Suppose the data size of the computation result corresponding to typical input $k$ is $s_k^a$, then the cached data at BS $n$ should not exceed its storage capacity, which can be expressed as
\begin{equation}
    \sum_{a\in\mathcal{A}} \sum_{k\in\mathcal{K}^a} x^a_{nk}[r]s^a_k \leq C^s_n, \quad \forall n,r. \label{cons:storage}
\end{equation}

Updating the cached results between two consecutive frames consumes network bandwidth and storage I/O resources.
Let $c_n^u$ be the normalized update cost per unit of data at BS $n$, and let $[z]^+=\max\{z,0\}$.
Then the cache update cost in frame $r$ can be expressed as
\begin{equation}
    U[r] = \sum_{n\in\mathcal{N}}\sum_{a\in\mathcal{A}}\sum_{k\in\mathcal{K}^a}
    c_n^u s_k^a[x_{nk}^a[r]-x_{nk}^a[r-1]]^+, \label{eq:update_cost}
\end{equation}
where $\bm{x}[0]$ represents the initial caching decisions before the first frame.
Notice that \eqref{eq:update_cost} only counts newly cached results, since removing a cached result incurs negligible cost.

\subsection{Task Scheduling Model}
In this subsection, we introduce the decision variables related to task scheduling and derive the expression of response time.
Let $D^a[r,t]$ be the average response time of application $a$ in slot $t\in\mathcal{T}_r$. Generally, $D^a[r,t]$ consists of two parts: the computation delay that is caused by processing the computation tasks and the transmission delay that is caused by transmitting tasks and computation results among BSs. In the following, we will provide the formulations of computation delay and transmission delay, respectively.

Assume that the task arrival at BS $n$ in slot $t$ follows a Poisson process with expected rate $R_n^a[t]$.
We adopt a quasi-stationary queueing model in each slot, which is reasonable when the slot duration is sufficiently long compared with the task processing time.
For convenience, we define the total arrival rate of all BSs in the system as $R^a[t] = \sum_{n\in\mathcal{N}} R^a_n[t]$. 
%Given that each task has two choices when it arrives at BS $n$, 
In CEC, the computation workload can be redistributed among collaborative BSs.
Therefore, we introduce $\lambda_n^a[t] \in [0,1]$ to denote the fraction of application $a$'s workload distributed to BS $n$ in slot $t$. Apparently, for any application $a$, the sum of its workload distribution across all BSs should be equal to $1$, hence we have
\begin{equation}
    \sum_{n\in\mathcal{N}} \lambda^a_n[t] = 1, \quad \forall a,r,t. \label{cons:scheduling}
\end{equation}

As discussed in the previous subsection, the processing procedures of tasks are different depending on the value of $y^a_n[t]$.
Therefore, we will derive the expression of computation delay case by case.
Let $f^a_n[t]$ be the computational capability (in CPU frequency) allocated to application $a$ at BS $n$ in slot $t$
and $w^a$ be the average workload (in CPU cycles) of tasks from application $a$. 
As in~\cite{xu2018joint, ma2020cooperative}, we assume the workload of tasks follows an exponential distribution.
When $y_n^a[t] = 0$, the computation process satisfies the M/M/1 queueing model, so the average computation delay without cache-searching is
\begin{equation}
    D^{a,0}_n[t] = \frac{1}{\mu^{a,0}_n[t] - \lambda^a_n[t] R^a[t]}, \label{eq:delay_no_cache}
\end{equation}
where $\mu^{a,0}_n[t] = f^a_n[t]/w^a$ is the average processing rate with respect to the number of tasks.
To guarantee the stability of the queue, we must ensure
\begin{equation}
    \lambda^a_n[t] R^a[t] < \mu^{a,0}_n[t], \quad \forall a,n,r,t. \label{cons:stability_0}
\end{equation}

When $y^a_n[t]=1$, we will first search the cached results before processing. 
If the task's computation result is cached and reused, then the computation cost of the task is incurred by the searching process, which is denoted by $w^s$ (in CPU cycles). 
Otherwise, the task still needs to be processed from scratch.
Consequently, the Cumulative Distribution Function (CDF) of the total computation cost $w$ can be derived as
\begin{equation*}
    F(w) = 
    \begin{cases}
        1 - (1-P^{a}_{hr}[r])e^{-(w-w^s)/w^a}, & w \geq w^s \\
        0, & w < w^s
    \end{cases},
\end{equation*}
where $F(w)$ jumps from $0$ to $P^{a}_{hr}[r]$ at $w^s$, and gradually increases to $1$ thereafter.

According to the M/G/1 queueing model, the average computation delay with cache-searching is
\begin{align}
    D^{a,1}_n[r,t] = &\frac{1}{\mu^{a,1}_n[r,t]} + \frac{\lambda^a_n[t]R^a[t]}{2\mu^{a,1}_n[r,t](\mu^{a,1}_n[r,t]-\lambda^a_n[t]R^a[t])} \notag \\
    &+ \frac{(1-P^{a}_{hr}[r])(1+P^{a}_{hr}[r])\lambda^a_n[t]R^a[t]\mu^{a,1}_n[r,t]}{2(\mu^{a,1}_n[r,t]-\lambda^a_n[t]R^a[t])(\mu^{a,0}_n[t])^2}
    \label{delay:cache}
\end{align}
where $\mu^{a,1}_n[r,t] = f^a_n[t]/(w^s+(1-P^{a}_{hr}[r])w^a)$.
To guarantee the stability of the queue, we must have
\begin{equation}
    \lambda^a_n[t] R^a[t] < \mu^{a,1}_n[r,t], \quad \forall a,n,r,t.
    \label{cons:stability_1}
\end{equation}
Combining \eqref{cons:stability_0} and \eqref{cons:stability_1} yields
\begin{equation}
    \begin{aligned}
    &\lambda^a_n[t] R^a[t] - (1-y^a_n[t])\mu^{a,0}_n[t] \\
    &\qquad - y^a_n[t]\mu^{a,1}_n[r,t] < 0, \quad \forall a,n,r,t.
    \end{aligned}
    \label{cons:stability}
\end{equation}

Notice that if the target result is cached in other BSs, it must be transmitted to the BS $n$ for reuse. As in \cite{ma2020cooperative}, the transmission delay from neighboring BSs to BS $n$ can be assumed to be a constant $D^t_n$.
Hence, the average processing delay of application $a$ at BS $n$ in slot $t\in\mathcal{T}_r$ is given by
\begin{equation}
    D^a_n[r,t] = (1-y^a_n[t])D^{a,0}_n[t] + y^a_n[t](D^{a,1}_n[r,t] + P^{a,n}_{nhr}[r]D^t_n).
    \label{eq:D^a_n}
\end{equation}

In addition to the processing delay $D_n^a[r,t]$, the response time $D^a[r,t]$ of application $a$ also includes the transmission delay caused by transmitting tasks and computation results among BSs.
If $\lambda^a_n[t]R^a[t] - R^a_n[t] > 0$, then there are extra tasks offloaded from other BSs to BS $n$ and the induced transmission delay is
$D^t_n(\lambda^a_n[t]R^a[t] - R^a_n[t])/R^a[t]$.
Otherwise, if $\lambda^a_n[t]R^a[t] - R^a_n[t] < 0$, then part of BS $n$'s tasks will be offloaded to the other BSs.
The transmission delay induced by returning computation results to BS $n$ is $D^t_n(R^a_n[t] - \lambda^a_n[t] R^a[t])/R^a[t]$.
Based on the above discussion, the expression of $D^a[r,t]$ is given as
\begin{equation*}
    D^a[r,t] = \sum_{n\in\mathcal{N}} \left[ \lambda^a_n[t]D^a_n[r,t] + \frac{|\lambda^a_n[t] R^a[t] - R^a_n[t]|}{R^a[t]}D^t_n \right].
\end{equation*}

In practice, it may be beneficial to offload part of tasks to the cloud when the local BSs are overloaded.
This situation can be accommodated in our model by regarding the remote cloud as a neighboring BS with sufficient computational power and large transmission delays.
If the bandwidth of the core network is limited, one can also treat the transmission process as an M/M/1 queueing model (as in \cite{ma2020cooperative}).
The resulting transmission delay is a function of the offloaded workload, which takes a similar form of \eqref{eq:delay_no_cache},
and hence brings no extra complexity to the problem-solving.
Therefore, to keep the description concise and focus on the key parts of the problem, we do not explicitly consider offloading to the cloud in our model.

\subsection{Problem Formulation}
This paper aims to minimize a long-term cost comprising the weighted response time across different applications and the cache update cost by jointly optimizing the frame-level computation result caching decisions $x^a_{nk}[r]$ and the slot-level workload scheduling policy $\lambda^a_n[t]$, 
cache-searching strategy $y^a_n[t]$, and computational capability allocation $f^a_n[t]$.
Let $\phi^a$ be the weight of application $a$ that reflects its priority,
and the coefficient $\eta\geq 0$ controls the tradeoff between response time and cache update cost.
Then the corresponding two-timescale optimization problem can be formulated as:
\begin{align}
    \min_{\bm{x},\bm{y},\bm{\lambda},\bm{f}}\; & \limsup_{J\rightarrow\infty}\frac{1}{J}\sum_{r=1}^{J}
    \mathbb{E}\left[\frac{1}{T}\sum_{t\in\mathcal{T}_r}\sum_{a\in\mathcal{A}} \phi^a D^a[r,t] + \eta U[r]\right] \label{lmp} \\
    s.t.\quad & \eqref{cons:storage}, \eqref{cons:scheduling}, \eqref{cons:stability} \quad \tag{\ref{lmp}{a}} \label{lmp:before_cons} \\
    & \sum_{a\in\mathcal{A}} f^a_n[t] \leq C^c_n, \quad \forall n,r,t \tag{\ref{lmp}{b}} \label{lmp:computing} \\
    & x^a_{nk}[r]\in\{0,1\}, \quad \forall a,n,k,r \tag{\ref{lmp}{c}} \label{lmp:variable_x} \\
    % & y^a_n[t]\in\{0,1\}, \quad \forall a,n,r,t \tag{\ref{lmp}{d}} \label{lmp:variable_y} \\
    & y^a_n[t]\in\{0,1\}, \lambda^a_n[t], f^a_n[t] \geq 0, \quad \forall a,n,t \tag{\ref{lmp}{d}} \label{lmp:variable_ylf}
\end{align}
where $\bm{x}, \bm{y}, \bm{\lambda}, \bm{f}$ are vectors of corresponding decision variables.
The expectation is taken with respect to the time-varying task popularity and arrival rates.
% At the beginning of frame $r$, $\bm{x}[r]$ is determined according to the current popularity estimate and the cache state inherited from the previous frame.
% At the beginning of slot $t$, $\bm{y}[t]$, $\bm{\lambda}[t]$, and $\bm{f}[t]$ are determined after the current arrival rates are observed.
Constraint \eqref{lmp:before_cons} contains the caching and scheduling constraints described above.
Constraint \eqref{lmp:computing} requires that the computational power allocated to applications cannot exceed the BS's computational capacity.
Constraint \eqref{lmp:variable_x} and \eqref{lmp:variable_ylf} define the feasible regions of control variables.
% Notice that when the task popularity and arrival rates are time-invariant, $\eta=0$, and all the decisions remain unchanged over time,
% problem \eqref{lmp} reduces to the conventional static joint caching and scheduling problem.

The formulated problem \eqref{lmp} is a mixed integer non-linear programming problem and has no polynomial-time solution in general.
In fact, we can show that the problem \eqref{lmp} is NP-hard even for a significantly simplified special case, as stated in the following theorem:
\begin{theorem}
    The problem \eqref{lmp} is NP-hard even if all system states are time-invariant and there is only one BS and one application in the system.
    \label{theorem:np}
\end{theorem}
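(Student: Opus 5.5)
We reduce from the 0--1 Knapsack problem, which is NP-hard in its decision form: given items with integer sizes $s_k$ and values $v_k$, a capacity $C$, and a target $V$, decide whether some subset has total size at most $C$ and total value at least $V$.

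\textbf{Collapsing the special case.} Take the time-invariant instance with $N=1$ and $A=1$, and drop the superscripts $a,n$. Since the state is constant, each frame sees the same problem. We set $\eta=0$ (the statement allows $\eta\ge 0$), or equivalently start with $\bm{x}[0]$ equal to the all-zero vector and note that a stationary optimum pays the update cost only once, which vanishes in the long-run average. So \eqref{lmp} reduces to a single-frame problem: choose $x_k\in\{0,1\}$ with $\sum_k x_k s_k\le C^s$, and choose $y$ and $f$.

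With one BS and one application, constraint \eqref{cons:scheduling} forces $\lambda=1$. Transmission terms vanish because $P_{nhr}=0$ and $\lambda R - R_1 = 0$. The computational constraint is monotone, so an optimal allocation uses $f=C^c$. Hence the objective is $\phi$ times $\min\{D^{0},\,D^{1}(P_{hr})\}$, where $P_{hr}=\sum_k x_k p_k$.

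\textbf{Monotonicity of the delay.} We choose parameters so that searching is worthwhile: $w^s$ small relative to $w$, and $R$ small enough that both queues are stable for every $P_{hr}\in[0,1]$. The key step, and the main obstacle, is to show that $D^{1}$ in \eqref{delay:cache} is strictly decreasing in $P_{hr}$ on $[0,1]$. Write $\mu^1=C^c/(w^s+(1-P)w)$. As $P$ increases, $\mu^1$ increases, so the first two terms decrease; the third term has the factor $(1-P)(1+P)=1-P^2$, which is decreasing, multiplied by $\mu^1/(\mu^1-R)$, which is also decreasing in $\mu^1$. Hence every term is nonincreasing and the first is strictly decreasing. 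I would verify this by differentiating, or more simply by checking monotonicity term by term as above. If the third term gave trouble, I would instead pick $R\to 0$ so that $D^1\approx 1/\mu^1$, which is clearly strictly decreasing.

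\textbf{Reduction and correctness.} Additionally pick $w^s$ so that $D^{1}(0)\le D^{0}$, making $y=1$ optimal for all $P$. Minimizing the objective is then equivalent to maximizing $\sum_k x_k p_k$ subject to $\sum_k x_k s_k\le C^s$. Given a knapsack instance, set $p_k=v_k/\sum_j v_j$ (padding with a dummy typical input of huge size if a normalization slack is needed), $s_k=s_k$, and $C^s=C$. The knapsack answer is yes if and only if the optimal value of \eqref{lmp} is at most $\phi D^{1}(V/\sum_j v_j)$. By strict monotonicity this threshold can be computed in polynomial time, or at least compared at polynomial precision since all the data are rational. Therefore a polynomial-time algorithm for this special case would solve Knapsack, so the special case, and hence \eqref{lmp}, is NP-hard.
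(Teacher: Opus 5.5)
Your proposal is correct and follows the paper's route: you collapse the time-invariant one-BS, one-application case ($\lambda=1$, $f=C^c$, $y=1$) to maximizing the hit rate $\sum_k x_k p_k$ under the storage constraint, i.e., the 0--1 knapsack problem, and you also supply details the paper only asserts, namely the term-by-term monotonicity of $D^{1}$ in $P_{hr}$, the handling of $\eta$ and the long-run average, and an explicit decision threshold. One small correction: $D^{1}(0)\le D^{0}$ holds only for $w^s=0$, since for any $w^s>0$ searching is strictly worse at $P_{hr}=0$, so either take $w^s=0$ (the model permits it), or note that $\min\{D^{0},D^{1}(P_{hr})\}$ is nonincreasing and choose $w^s$ small enough that $D^{1}(V/\sum_j v_j)<D^{0}$, which keeps your threshold argument valid.
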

\begin{proof}
   Please see Appendix \ref{appendix:1}.
\end{proof}

Theorem \ref{theorem:np} demonstrates that it is computationally prohibitive to compute the optimal caching decisions even for the time-invariant single-BS and single-application case.
In realistic situations, we also need to consider the correlation between caching decisions at different BSs,
the workload scheduling policy among BSs, the allocation of computational power to each application, and the temporal variations of task popularity and arrival rates.
All of these aspects substantially complicate the formulated problem \eqref{lmp} and make it challenging to solve.

\section{Algorithm Design and Theoretical Analysis} \label{section:algorithm_design}
In this section, we develop an online algorithm for problem \eqref{lmp} by coordinating caching and scheduling decisions across two timescales.
At the frame timescale, caching decisions are determined using current task popularity estimates and recent workload observations.
At the slot timescale, workload scheduling, cache searching, and computational resource allocation are adapted to the current task arrival rates, with caching decisions held fixed within each frame.
We further analyze the convergence properties and computational complexity of the proposed algorithm.

\subsection{Problem Decomposition}
At the beginning of frame $r$, result caching decisions must be determined without knowledge of future task arrival rates.
In practice, task popularity and arrival rates often exhibit temporal correlation, making recent observations informative about near-future demand.
We therefore characterize the workload and task popularity using observations from the $L$ most recent slots, indexed by
$\mathcal{L}_r=\{(r-1)T-L+1,\dots,(r-1)T\}$.
Each sample $\ell\in\mathcal{L}_r$ is represented by the arrival-rate vector $\bm{R}[\ell]=(R^a_n[\ell])_{a\in\mathcal{A},n\in\mathcal{N}}$,
with aggregate application arrival rate $R^a[\ell]=\sum_{n\in\mathcal{N}}R^a_n[\ell]$.
Given the caching decisions $\bm{x}[r]$ and the estimated task popularity $\bm{p}[r]$, let $D^{a,\ell}[r]$ denote the response time of application $a$ evaluated under workload sample $\bm{R}[\ell]$ and sample-specific scheduling decisions
$\bm{y}^{\ell}[r],\bm{\lambda}^{\ell}[r],\bm{f}^{\ell}[r]$.
The resulting sample-average weighted response time for frame $r$ is expressed as
\begin{equation}
    \overline{D}[r] = \frac{1}{L}\sum_{\ell\in\mathcal{L}_r}\sum_{a\in\mathcal{A}}\phi^aD^{a,\ell}[r].
    \label{eq:sample_average_delay}
\end{equation}
Based on the observed samples, we formulate the following frame-level surrogate problem:
\begin{align}
    \min_{\bm{x}[r],\{\bm{y}^{\ell}[r],\bm{\lambda}^{\ell}[r],\bm{f}^{\ell}[r]\}_{\ell\in\mathcal{L}_r}}\quad
    & \overline{D}[r]+\eta U[r] \label{subproblem:frame} \\
    s.t.\quad & \eqref{cons:storage}, \eqref{cons:scheduling}, \eqref{cons:stability},
    \eqref{lmp:computing}-\eqref{lmp:variable_ylf}. \notag
\end{align}
In problem \eqref{subproblem:frame}, the slot-level constraints are applied to every workload sample by replacing the slot-indexed arrival rates and scheduling decisions with their sample-specific counterparts.

For convenience, we temporarily relax constraint \eqref{cons:stability} and will re-consider it in Section \ref{subsection:scheduling}.
To optimize the weighted response time, we should choose the cache-searching strategy that minimizes the processing delay
for each BS and each application.
According to \eqref{eq:D^a_n}, when both processing modes satisfy the corresponding stability conditions, the optimal value of $y^a_n[t]$ is $y^a_n[t]=1$ if
\begin{equation}
    D^{a,0}_n[t] > D^{a,1}_n[r,t] + P^{a,n}_{nhr}[r]D^t_n,
    \label{eq:optimal_y}
\end{equation}
and $y^a_n[t]=0$ otherwise.
If only one processing mode is stable, we select the stable one.
Therefore, $\bm{y}[t]$ is automatically determined once $\bm{x}[r], \bm{\lambda}[t]$, and $\bm{f}[t]$ are resolved.
Hence, we will not consider $\bm{y}$ explicitly in the following optimization.

The nonconvex coupling between caching and scheduling decisions in problem \eqref{subproblem:frame} makes joint optimization challenging.
We therefore adopt alternating minimization (AM)\footnote{Also known as block coordinate descent.} to update the frame-level caching decisions and the sample-specific scheduling decisions in turn.
In each caching update, we fix the sample-specific scheduling decisions, including the cache-searching decisions, and minimize the sum of the sample-average weighted response time and cache update cost.
The relaxed solution is converted to a binary candidate, after which the cache-searching decisions are updated according to \eqref{eq:optimal_y}.
To account for rounding and numerical errors, we accept the candidate only if it strictly decreases the complete objective of problem \eqref{subproblem:frame}.
With the caching decisions fixed, the scheduling decisions are separable across workload samples and can be optimized independently.

Once the caching decisions $\bm{x}[r]$ are implemented, they remain fixed throughout frame $r$, while the scheduling decisions are adapted to the observed task arrival rates in each slot.
Since the cache update cost is fixed within the frame, the scheduling problem for slot $t\in\mathcal{T}_r$ reduces to
\begin{align}
    \min_{\bm{\lambda}[t],\bm{f}[t]}\quad & \sum_{a\in\mathcal{A}} \phi^a D^a[r,t] \label{subproblem:scheduling} \\
    s.t.\quad & \eqref{cons:scheduling}, \eqref{cons:stability}, \eqref{lmp:computing}, \eqref{lmp:variable_ylf}. \notag
\end{align}
where the cache-searching decisions $\bm{y}[t]$ are determined by \eqref{eq:optimal_y} for the given caching decisions $\bm{x}[r]$ and scheduling variables $\bm{\lambda}[t]$ and $\bm{f}[t]$.
In the following two subsections, we develop efficient algorithms for the caching and scheduling subproblems.

\subsection{Caching Subproblem} \label{subsection:caching}

The caching decisions across BSs are tightly coupled, making their joint optimization challenging. 
To address this issue, we optimize the caching decision of one BS at a time while fixing those of the remaining BSs. 
Specifically, let $\bm{x}_n$ and $\bm{x}_{-n}$ denote the caching decisions at BS $n$ and all other BSs, respectively. 
For notational simplicity, we omit the frame index $r$ throughout this subsection. 
For any given $\bm{x}_{-n}$, we optimize $\bm{x}_n$ while keeping all sample-specific scheduling and cache-searching decisions fixed. 
Let $x_{nk}^{a,\mathrm{old}}=x_{nk}^a[r-1]$ denote the caching decision in the preceding frame.
Since $x_{nk}^{a,\mathrm{old}}\in\{0,1\}$, the cache update cost can be expressed as a linear function of $x_{nk}^a$:
\begin{equation}
    [x_{nk}^a-x_{nk}^{a,\mathrm{old}}]^+
    =(1-x_{nk}^{a,\mathrm{old}})x_{nk}^a.
    \label{eq:update_linearization}
\end{equation}
Therefore, by defining $\delta_{nk}^a=\eta c_n^u(1-x_{nk}^{a,\mathrm{old}})$ and omitting the constant update costs at other BSs, we obtain the caching subproblem
\begin{align}
    \min_{\bm{x}_n}\quad &J_n(\bm{x}_n)
      =\overline D(\bm{x}_n,\bm{x}_{-n})
       +\sum_{a,k}\delta_{nk}^a s_k^a x_{nk}^a
       \label{subproblem:caching}\\
    \text{s.t.}\quad &\sum_{a,k}s_k^a x_{nk}^a\leq C_n^s,
      \quad x_{nk}^a\in\{0,1\}.\notag
\end{align}
For convenience, we temporarily relax the binary constraint of $x^a_{nk}$ and assume it is a real number in $[0,1]$.
The relaxed caching subproblem at BS $n$ is called problem $P_n^c$.
We will reconstruct a binary caching decision after solving $P_n^c$.
% With $\bm{x}_{-n}$ binary, the hit rates are extended as affine functions of $\bm{x}_n$ that coincide with the original expressions at binary decisions.
% We assume that the system remains stable when BS $n$ caches no results, i.e., $\bm{x}_n=\bm{0}$, under the fixed decisions specified above.
% Since increasing the hit rate preserves stability, all relaxed caching decisions and their rounded solutions then remain stable.
% If this condition is not satisfied, we retain the current caching decisions at BS $n$.

According to the expressions of $D^a_n$ and $D^{a,1}_n$, the influence of caching decisions on response time is through the hit rate.
However, the caching decisions at the same BS are constrained by the storage space, and caching a new result incurs an update cost.
To use the available storage efficiently, it should be allocated to the results that contribute most to reducing $J_n$ in \eqref{subproblem:caching}.
In the following, we first quantify the reduction in response time and then incorporate the update cost.
Inspired by this idea, we use $\sigma^{a}_{nk} = x^{a}_{nk} s^a_k$ to denote the storage space allocated to $x^a_{nk}$ and
define the marginal storage efficiency of $x^a_{nk}$ under workload sample $\ell$ as
\begin{align}
    \epsilon^{\ell}(x^a_{nk}) &= \frac{\partial \sum_{a'\in\mathcal{A}} \phi^{a'} D^{a',\ell}}{\partial \sigma^{a}_{nk}} \notag \\
    &= \frac{\partial \sum_{a'\in\mathcal{A}} \sum_{n'\in\mathcal{N}} \phi^{a'} \lambda^{a'}_{n'} y^{a'}_{n'} (D^{a',1}_{n'} + P^{a',n'}_{nhr} D^t_{n'})}
    {\partial \sigma^a_{nk}} \notag \\
    &= \sum_{n'\in\mathcal{N}} \phi^a \lambda^a_{n'} y^a_{n'} \left( \frac{\partial D^{a,1}_{n'}}{\partial P^{a}_{hr}} 
    \frac{\partial P^a_{hr}}{\partial x^a_{nk}s^a_k} + D^t_{n'} \frac{\partial P^{a,n'}_{nhr}}{\partial x^a_{nk}s^a_k}  \right). \label{eq:x_a_nk}
\end{align}
% A more negative $\epsilon^{\ell}(x^a_{nk})$ indicates a greater reduction in the weighted response time per unit of additional storage.
For notational convenience, we omit the sample index $\ell$ in the following derivation.
Let $I^{a}_{-n,k} = I_{>0}(\sum_{m\in\mathcal{N},m\ne n} x^a_{mk})$ be the binary indicator indicating whether the result of the 
$k$-th typical input in $\mathcal{K}^a$ is cached outside BS $n$.
Then according to the definition of $P^a_{hr}$ and $P^{a,n}_{nhr}$ we have 
\begin{align}
    \frac{\partial P^a_{hr}}{\partial x^a_{nk}} &= p^a_k(1-I^a_{-n,k}), \label{eq:P_a_hr_1} \\
    \frac{\partial P^{a,n'}_{nhr}}{\partial x^a_{nk}} &= 
    \begin{cases}
        p^a_k(1 - x^a_{n'k})(1 - I^a_{-n,k}) & n' \ne n, \label{eq:P_a_nhr_2} \\
        -p^a_k I^a_{-n',k} & n' = n.
    \end{cases}
\end{align}
Since $x^a_{n'k} = 1$ implies $I^a_{-n,k} = 1$ when $n' \ne n$, the term $p^a_k(1 - x^a_{n'k})(1 - I^a_{-n,k})$ can be simplified to
$p^a_k(1 - I^a_{-n,k})$. Substituting \eqref{eq:P_a_hr_1} and \eqref{eq:P_a_nhr_2} into \eqref{eq:x_a_nk} implies
\begin{align}
    \epsilon^{\ell}(x^a_{nk}) =&
    \frac{p^a_k(1-I^a_{-n,k})}{s^a_k} 
    \Biggl[ \phi^a \lambda^a_{n} y^a_{n} \frac{\partial D^{a,1}_{n}}{\partial P^a_{hr}} \notag \\
    &+\sum_{n'\in\mathcal{N}, n'\ne n} \phi^a \lambda^a_{n'} y^a_{n'} 
    \left( \frac{\partial D^{a,1}_{n'}}{\partial P^a_{hr}} + D^t_{n'} \right) \Biggr] \notag \\
    &- \frac{p^a_k I^a_{-n,k}}{s^a_k} \phi^a \lambda^a_n y^a_n D^t_n, \label{eq:storage_efficiency_2}
\end{align}
where
\begin{align}
    \frac{\partial D^{a,1}_n}{\partial P^a_{hr}} = &- \frac{w^a}{f^a_n} 
    - \frac{(\lambda^a_n R^a)^2 w^a ((1-P^a_{hr})w^a + w^s)^2}{2 f^a_n (f^a_n - \lambda^a_n R^a ((1-P^a_{hr}) w^a + w^s))^2} \notag \\
    &- \frac{f^a_n (\lambda^a_n R^a)^2 (1-P^a_{hr})(1+P^a_{hr}) w^a}{2 (\mu^{a,0}_n)^2 (f^a_n - \lambda^a_n R^a ((1-P^a_{hr}) w^a + w^s))^2} \notag \\
    &- \frac{f^a_n \lambda^a_n R^a P^a_{hr}}{(\mu^{a,0}_n)^2 (f^a_n - \lambda^a_n R^a ((1-P^a_{hr}) w^a + w^s))} \notag \\
    &- \frac{\lambda^a_n R^a w^a ((1-P^a_{hr})w^a + w^s)}{f^a_n (f^a_n - \lambda^a_n R^a ((1-P^a_{hr}) w^a + w^s))}.
    \label{eq:cache_delay_derivative}
\end{align}
Based on $\epsilon^{\ell}(x^a_{nk})$, the sample-average marginal storage efficiency 
can be expressed as
\begin{equation}
    \overline\epsilon_{nk}^a(\bm{x}_n)
      =\frac{1}{L}\sum_{\ell\in\mathcal L_r}\epsilon^\ell(x_{nk}^a).
    \label{eq:average_storage_efficiency}
\end{equation}
To account for the update cost, we further define the marginal storage efficiency of the objective as
\begin{equation}
    \widetilde\epsilon_{nk}^a(\bm{x}_n)
      =\frac{\partial J_n}{\partial\sigma_{nk}^a}
      =\overline\epsilon_{nk}^a(\bm{x}_n)+\delta_{nk}^a.
    \label{eq:update_aware_efficiency}
\end{equation}
Equation~\eqref{eq:update_aware_efficiency} measures the marginal change in the objective per unit of allocated storage, accounting for both response time and cache update cost.
% A smaller value therefore indicates a greater marginal benefit from allocating additional storage.
Moreover, one can verify that $D_n^{a,1}$ is convex in $P_{hr}^a$ whenever the queueing stability condition in \eqref{cons:stability} is satisfied with $y_n^a=1$.
Since the hit rates are affine in $\bm{x}_n$ and the transfer-delay and update-cost terms are linear, the relaxed caching subproblem is convex.
Therefore, its optimal solution can be characterized by the Karush--Kuhn--Tucker (KKT) conditions, yielding the following theorem.

\begin{theorem}
    Suppose that the system remains stable at $\bm{x}_n=\bm{0}$ under the fixed caching decisions at other BSs and the fixed sample-specific scheduling and cache-searching decisions. 
    For $C_n^s>0$, a feasible solution $\bm{x}_n^*$ is optimal for $P_n^c$ if and only if there exists $B^*\leq0$ such that
    \begin{equation}
    \begin{cases}
      \widetilde\epsilon_{nk}^a(\bm{x}_n^*)=B^*, &0<x_{nk}^{a,*}<1,\\
      \widetilde\epsilon_{nk}^a(\bm{x}_n^*)\leq B^*, &x_{nk}^{a,*}=1,\\
      \widetilde\epsilon_{nk}^a(\bm{x}_n^*)\geq B^*, &x_{nk}^{a,*}=0,
    \end{cases}
    \label{eq:cache_kkt_threshold}
    \end{equation}
    for all $a,k$, together with
    \begin{equation}
      B^*\left(C_n^s-\sum_{a,k}s_k^a x_{nk}^{a,*}\right)=0.
      \label{eq:cache_capacity_complementarity}
    \end{equation}
    \label{theorem:optimality}
\end{theorem}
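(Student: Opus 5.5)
The plan is to write $P_n^c$ as a convex program over the box $[0,1]^{K}$ with one linear knapsack constraint, and then read off the KKT conditions in the storage variables $\sigma^a_{nk}=s^a_kx^a_{nk}$.

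\emph{Convexity and regularity.} With $\bm{x}_{-n}$ binary and the scheduling and cache-searching decisions fixed, each hit rate $P^a_{hr}$ and $P^{a,n'}_{nhr}$ is affine in $\bm{x}_n$, by \eqref{eq:P_a_hr_1}--\eqref{eq:P_a_nhr_2}. The update-cost term is linear, by \eqref{eq:update_linearization}. The delay term $D^{a,1}_{n'}$ is convex in $P^a_{hr}$ on the stable region. Hence $J_n$ is a nonnegative combination of convex functions composed with affine maps, and is therefore convex.

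The stability hypothesis at $\bm{x}_n=\bm{0}$ is used as follows. Increasing $\bm{x}_n$ only increases $P^a_{hr}$. This increases $\mu^{a,1}_{n'}$, so every feasible point of the relaxed problem lies in the stable region, where $J_n$ is finite and differentiable. All constraints are affine, so the linearity constraint qualification holds, and KKT is necessary and sufficient for optimality. Note that $C_n^s>0$ gives a Slater point anyway, namely $\bm{x}_n=\bm{0}$ with slack capacity.

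\emph{KKT system.} I would introduce a multiplier $\beta\ge0$ for the capacity constraint and multipliers $\underline{\nu}_{nk}^a,\overline{\nu}_{nk}^a\ge0$ for $x\ge0$ and $x\le1$. Stationarity reads
\[
\frac{\partial J_n}{\partial x^a_{nk}}+\beta s^a_k-\underline{\nu}^a_{nk}+\overline{\nu}^a_{nk}=0 .
\]
Dividing by $s^a_k>0$ and using \eqref{eq:update_aware_efficiency}, so that $\partial J_n/\partial x^a_{nk}=s^a_k\widetilde\epsilon^a_{nk}$, this becomes
\[
\widetilde\epsilon^a_{nk}=-\beta+\frac{\underline{\nu}^a_{nk}-\overline{\nu}^a_{nk}}{s^a_k}.
\]
Set $B^*=-\beta\le0$. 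Complementary slackness then gives the three cases of \eqref{eq:cache_kkt_threshold}:
\begin{itemize}
\item For interior $x$, both box multipliers vanish, so $\widetilde\epsilon=B^*$.
\item For $x=1$, only $\overline{\nu}$ can be positive, so $\widetilde\epsilon\le B^*$.
\item For $x=0$, only $\underline{\nu}$ can be positive, so $\widetilde\epsilon\ge B^*$.
\end{itemize}
Complementary slackness for the capacity constraint is exactly \eqref{eq:cache_capacity_complementarity}.

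For the converse, given $B^*$ satisfying the conditions, I would define
\[
\beta=-B^*,\qquad \underline{\nu}=s^a_k[\widetilde\epsilon-B^*]^+ \text{ at } x=0,\qquad \overline{\nu}=s^a_k[B^*-\widetilde\epsilon]^+ \text{ at } x=1,
\]
with all other box multipliers zero. This yields a valid KKT point, and convexity gives global optimality.

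\emph{Main obstacle.} The delicate step is the convexity of $D^{a,1}_n$ in $P^a_{hr}$, together with the fact that the composition stays well defined. The paper only states the convexity as "one can verify", so I would check it directly. Write $u=1-P$, so $\mu^{a,1}=f/(w^s+uw^a)$. Each of the three terms of \eqref{delay:cache} is then a product or ratio of positive affine functions of $u$ with the denominator $f-\lambda R(w^s+uw^a)$, which is positive and decreasing in $u$. Their second derivatives are positive on the stable region, consistent with the negative, increasing derivative \eqref{eq:cache_delay_derivative}.

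A second subtlety is that $\epsilon^\ell$ as written uses the indicator $I^a_{-n,k}$ and fixed $y$. I must argue that, for fixed $\bm{x}_{-n}$, the hit-rate expressions are the affine extensions in $\bm{x}_n$, so that the derivative formula is exact rather than a subgradient. This holds because the indicators depend only on $\bm{x}_{-n}$.
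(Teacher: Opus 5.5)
\paragraph{Comparison with the paper.} Your KKT derivation, your converse via explicit multipliers, and your use of stability at $\bm{x}_n=\bm{0}$ (monotonicity of $\mu^{a,1}_{n'}$ in the hit rate) match the paper's proof in Appendix~\ref{appendix:2}. That proof simply takes convexity of $D^{a,1}_n$ in $P^a_{hr}$ as given.

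\paragraph{The gap: term-by-term convexity fails.} The problem is the step you single out as the crux. Your term-by-term check of that convexity is false. Put $u=1-P^a_{hr}$, $e=f^a_n-\lambda^a_nR^aw^s$ and $g=\lambda^a_nR^aw^a$, so that stability reads $e-gu>0$. The third term of \eqref{delay:cache} is $\frac{\lambda^a_nR^a(w^a)^2}{2f^a_n}\cdot\frac{u(2-u)}{e-gu}$. Its numerator $u(2-u)$ is a product of positive affine functions, and it is concave. Polynomial division gives
\[
\frac{u(2-u)}{e-gu}=\frac{u}{g}-\frac{2-e/g}{g}+\frac{(e/g)(2-e/g)}{e-gu},
\]
so its second derivative has the sign of $2-e/g$. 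Hence the third term is strictly concave whenever $f^a_n>\lambda^a_nR^a(w^s+2w^a)$. This is a lightly loaded but perfectly stable regime, e.g.\ $f^a_n=1$~GHz, $\lambda^a_nR^a=1$ task/s, $w^a=400$ and $w^s=25$ Mcycles. ``Product or ratio of positive affine functions'' is not a convexity-preserving operation, and your claim that each term has positive second derivative is wrong.

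\paragraph{The repair.} Group the second and third terms. With $c=w^s+uw^a$, their common numerator is
\[
c^2+(w^a)^2u(2-u)=(w^s)^2+2w^a(w^s+w^a)u .
\]
The quadratic terms cancel, because this is the second moment of the per-task workload in the Pollaczek--Khinchine formula. So the two terms sum to
\[
\frac{\lambda^a_nR^a}{2f^a_n}\left(-\frac{2w^a(w^s+w^a)}{g}+\frac{K}{e-gu}\right),\qquad K=(w^s)^2+\frac{2w^a(w^s+w^a)e}{g}>0 .
\]
This is convex on $e-gu>0$. The first term $c/f^a_n$ is affine. Together these give convexity of $D^{a,1}_n$ in $P^a_{hr}$, and the rest of your argument then goes through.

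\paragraph{A smaller point on affinity.} Your justification of affinity in $\bm{x}_n$ is slightly off. For $n'\neq n$, the indicator inside $P^{a,n'}_{nhr}$ does involve $x^a_{nk}$. Affinity is therefore obtained by \emph{defining} the relaxation through the affine extension (replacing $I_{>0}(x^a_{nk})$ by $x^a_{nk}$ when no third BS caches $k$), as the paper implicitly does. It does not hold because the indicators depend only on $\bm{x}_{-n}$.
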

\begin{proof}
   Please see Appendix~\ref{appendix:2}.
\end{proof}
Based on Theorem~\ref{theorem:optimality}, we solve $P_n^c$ by searching for the marginal storage efficiency threshold $B^*$ and determining the corresponding caching decisions.
Specifically, for $B\leq0$, we define
\begin{equation}
    g(B)=\mathop{\rm arg\,min}_{\bm{0}\leq\bm{x}_n\leq\bm{1}}
       \left\{J_n(\bm{x}_n)-B S(\bm{x}_n)\right\}
    \label{eq:threshold_mapping}
\end{equation}
where $S(\bm{x}_n)=\sum_{a,k}s_k^a x_{nk}^a$ is the total storage used at BS $n$.
By \eqref{eq:update_aware_efficiency}, the derivative of the objective in \eqref{eq:threshold_mapping} with respect to $x_{nk}^a$ is
\begin{equation*}
    \frac{\partial [J_n(\bm{x}_n)-B S(\bm{x}_n)]}{\partial x_{nk}^a}
    =s_k^a\bigl(\widetilde\epsilon_{nk}^a(\bm{x}_n)-B\bigr).
\end{equation*}
At a minimizer over $0\leq x_{nk}^a\leq1$, this derivative must be zero for $0<x_{nk}^a<1$, nonpositive for $x_{nk}^a=1$, and nonnegative for $x_{nk}^a=0$.
Thus, solving \eqref{eq:threshold_mapping} yields caching decisions
that satisfy \eqref{eq:cache_kkt_threshold} for the given threshold $B$.

Next, we examine how the total storage usage varies with $B$.
For any $B_1<B_2$, let $\bm{x}_1$ and $\bm{x}_2$ be the corresponding optimal solutions of \eqref{eq:threshold_mapping}.
By their optimality, we have
\begin{align*}
    J_n(\bm{x}_1)-B_1S(\bm{x}_1)
    &\leq J_n(\bm{x}_2)-B_1S(\bm{x}_2),\\
    J_n(\bm{x}_2)-B_2S(\bm{x}_2)
    &\leq J_n(\bm{x}_1)-B_2S(\bm{x}_1).
\end{align*}
Adding these inequalities yields
\begin{equation*}
    (B_2-B_1)\bigl(S(\bm{x}_2)-S(\bm{x}_1)\bigr)\geq0.
\end{equation*}
Since $B_2>B_1$, we have $S(\bm{x}_2)\geq S(\bm{x}_1)$.
Thus, the total storage usage is nondecreasing with respect to $B$, 
which enables bisection search of $B^*$.

The search starts at $B=0$.
If the resulting caching decisions satisfy the storage constraint, they are optimal for $P_n^c$.
Otherwise, we search for a negative threshold to reduce storage usage.
The search interval is initialized with $B_r=0$ and a lower bound $B_l$ at which caching no results is optimal.
At each iteration, we evaluate the midpoint: if the resulting storage usage exceeds $C_n^s$, we decrease the threshold; otherwise, we increase it, as specified in Algorithm~\ref{alg:caching}.
The iterations continue until the interval width satisfies $B_r-B_l\leq\xi$, where $\xi>0$ is the prescribed search tolerance.

We next develop an efficient algorithm to compute a minimizer in $g(B)$ for a given threshold $B$. 
Intuitively, in \eqref{eq:threshold_mapping}, $-B$ serves as an additional cost per unit of storage. 
Thus, a more negative $B$ makes storage more costly and favors solutions with lower storage usage. 
For a fixed $B$, the caching decisions can be optimized separately for each application. 
Results already cached at other BSs can be handled directly by comparing their marginal storage efficiencies with $B$. 
For the remaining results, we establish their caching priorities based on their popularity, sizes, and update costs under the given threshold $B$, and then obtain the caching decisions via a one-dimensional bisection search.

% The next step is to efficiently compute a minimizer in $g(B)$ for a given threshold $B$.
% In \eqref{eq:threshold_mapping}, $-B$ serves as an additional cost per unit of storage.
% A more negative $B$ makes storage usage more costly and therefore favors lower storage usage.
% For a fixed $B$, the caching decisions can be optimized separately for each application.
% Results already cached at other BSs can be handled by directly comparing their marginal storage efficiencies with $B$.
% For the remaining results, we determine their caching priorities
% based on their popularity, sizes, and update costs for the given
% threshold $B$, and then use a one-dimensional bisection search
% to determine the caching decisions.

Let $\mathcal K_1^a=\{k:I_{-n,k}^a=1\}$ and $\mathcal K_0^a=\{k:I_{-n,k}^a=0\}$ denote the results already cached and not cached at other BSs, respectively.
We first consider $k\in\mathcal K_1^a$.
Caching such a result at BS $n$ does not change the overall hit rate, but avoids the delay of retrieving it from another BS.
Let
\begin{equation*}
    v_n^a=\frac{1}{L}\sum_{\ell\in\mathcal L_r}
           \phi^a\lambda_n^{a,\ell}y_n^{a,\ell}D_n^t
\end{equation*}
denote the sample-average weighted remote-transfer delay per unit of remote hit probability at BS $n$.
By \eqref{eq:storage_efficiency_2} and \eqref{eq:update_aware_efficiency}, the marginal storage efficiency is the constant
\begin{equation*}
    \widetilde\epsilon_{nk}^a
      =\delta_{nk}^a-\frac{p_k^a v_n^a}{s_k^a},
      \qquad k\in\mathcal K_1^a.
\end{equation*}
Thus, the objective in \eqref{eq:threshold_mapping} is linear in $x_{nk}^a$ for these results.
Consequently, the optimal solution is $x_{nk}^a=1$ if $\widetilde\epsilon_{nk}^a<B$ and $x_{nk}^a=0$ otherwise.

We next consider $k\in\mathcal K_0^a$.
Caching these results increases the overall hit rate, so their marginal storage efficiencies depend on their joint caching decisions.
Define
\begin{equation*}
    P_0^a=\sum_k p_k^a I_{-n,k}^a,
    \qquad h_a(\bm{x}_n)=\sum_{k\in\mathcal K_0^a}p_k^a x_{nk}^a,
\end{equation*}
where $P_0^a$ is the hit rate provided by other BSs and $h_a$ is the additional hit rate provided by BS $n$.
The overall hit rate is therefore $P_{hr}^a=P_0^a+h_a(\bm{x}_n)$.
Let $Q_a(h)$ collect the computation-delay and remote-transfer terms that depend on this additional hit rate.
Its derivative is the sample average of the bracketed term in \eqref{eq:storage_efficiency_2}, evaluated at $P_{hr}^a=P_0^a+h$, and $Q_a$ is convex.
For a given $h$, these delay terms are fixed, so we only need to minimize the sum of the linear terms $(\delta_{nk}^a-B)s_k^a x_{nk}^a$ in \eqref{eq:threshold_mapping}.
To determine which results to cache first, we define the following sorting key for $p_k^a>0$:
\begin{equation}
    \theta_k^a(B)=\frac{(\delta_{nk}^a-B)s_k^a}{p_k^a}.
    \label{eq:cache_sorting_key}
\end{equation}
We sort these results in ascending order of $\theta_k^a(B)$,
denoting their indices by $\pi_1,\ldots,\pi_m$, where $m$ is
the number of results with positive popularity in $\mathcal K_0^a$.
Let $H_i=\sum_{j=1}^i p_{\pi_j}^a$ be the cumulative popularity of the first $i$ results, with $H_0=0$.
For a prescribed value $h$ of $h_a(\bm{x}_n)$, the sum of these linear terms is minimized by caching results in this order.
Specifically, if $h\in[H_{i-1},H_i]$, the results preceding $\pi_i$ are fully cached, $\pi_i$ may be partially cached, and the remaining results are not cached.
The corresponding minimum linear cost is
\begin{equation}
    C_a(h;B)=\sum_{j<i}(\delta_{n\pi_j}^a-B)s_{\pi_j}^a
          +\theta_{\pi_i}^a(B)(h-H_{i-1}).
    \label{eq:cache_hit_cost}
\end{equation}
Therefore, determining the caching decisions for $\mathcal K_0^a$ reduces to the following one-dimensional convex problem:
\begin{equation}
    \min_{0\leq h\leq H_m}\;Q_a(h)+C_a(h;B).
    \label{eq:cache_scalar_problem}
\end{equation}
Within each interval $H_{i-1}<h<H_i$, the derivative of the objective in \eqref{eq:cache_scalar_problem} is $Q_a'(h)+\theta_{\pi_i}^a(B)$.
Since this convex objective has a nondecreasing subgradient, the optimal $h$ can be found by bisection.
Once $h$ is obtained, the caching decisions are given by $x_{n\pi_i}^a=[(h-H_{i-1})/p_{\pi_i}^a]_0^1$, where $[z]_0^1=\min\{1,\max\{0,z\}\}$.
% Notice that the ordering in \eqref{eq:cache_sorting_key} depends on both the previous caching decisions and the threshold $B$, so it must be updated during the threshold search.
Finally, we obtain binary caching decisions by rounding all fractional entries of the feasible relaxed solution down to zero.

\begin{algorithm}[t]
    \caption{Caching Optimization}
    \label{alg:caching}
    \begin{algorithmic}[1]
        \REQUIRE $\bm{x}$, $\bm{x}_n^{\mathrm{old}}$, tolerances $\xi,\zeta>0$;
        \IF{the system is unstable when $\bm{x}_n=\bm{0}$}
            \RETURN the current caching decisions at BS $n$;
        \ENDIF
        \STATE Obtain $\bm{x}_r = g(0)$ using Algorithm~\ref{alg:g};
        \IF{$S(\bm{x}_r)\leq C_n^s$}
            \RETURN $\hat{\bm{x}}_n=\lfloor\bm{x}_r\rfloor$;
        \ELSE
            \STATE $B_l=\min\{0,\min_{a,k}\widetilde\epsilon_{nk}^a(\bm{0})\}$, $B_r=0$, $\bm{x}_l=\bm{0}$;
            \WHILE{$B_r-B_l>\xi$}
                \STATE $B_m=(B_l+B_r)/2$;
                \STATE Obtain $\bm{x}_m = g(B_m)$ using Algorithm~\ref{alg:g};
                \IF{$S(\bm{x}_m)\leq C_n^s$}
                    \STATE $(B_l,\bm{x}_l)=(B_m,\bm{x}_m)$;
                \ELSE
                    \STATE $B_r=B_m$;
                \ENDIF
            \ENDWHILE
            \RETURN $\hat{\bm{x}}_n=\lfloor\bm{x}_l\rfloor$;
        \ENDIF
    \end{algorithmic}
\end{algorithm}

\begin{algorithm}[t]
    \caption{Threshold Mapping}
    \label{alg:g}
    \begin{algorithmic}[1]
        \REQUIRE $B\leq0$, tolerance $\zeta$;
        \FOR{each application $a$}
            % \STATE Set all zero-popularity entries to zero;
            \FOR{each $k\in\mathcal K_1^a$ with $p_k^a>0$}
                \STATE Set $x_{nk}^a=1$ if $\delta_{nk}^a-p_k^a v_n^a/s_k^a<B$, and zero otherwise;
            \ENDFOR
            \STATE Sort positive-popularity $k\in\mathcal K_0^a$ by \eqref{eq:cache_sorting_key}; \label{alg:sort_line}
            \STATE Form the prefix sums $H_i$ and the piecewise-linear function $C_a(h;B)$;
            \STATE Solve \eqref{eq:cache_scalar_problem} by subgradient bisection until the search interval has width at most $\zeta$;
            \STATE Set $x_{n\pi_i}^a=[(h-H_{i-1})/p_{\pi_i}^a]_0^1$ for all $i=1,\ldots,m$;
        \ENDFOR
        \RETURN $\bm{x}_n$;
    \end{algorithmic}
\end{algorithm}

The detailed algorithm is summarized in Algorithms~\ref{alg:caching} and~\ref{alg:g}, 
where $\xi$ and $\zeta$ specify tolerances of bisection.
Let $K=\sum_a|\mathcal K^a|$.
Since $H_m\leq1$, the search over $h\in[0,H_m]$ requires $O(\log(1/\zeta))$ iterations.
Hence, the time complexity of Algorithm~\ref{alg:g} is
\begin{equation*}
    C_g=O\bigl(K\log K+A\log(1/\zeta)(LN+\log K)\bigr),
\end{equation*}
where the first term refers to the time complexity of sorting in line \ref{alg:sort_line}.
Based on this result, the time complexity of Algorithm~\ref{alg:caching} can be easily derived
as $C_c=O\bigl(C_g\log(1/\xi)\bigr)$.

The following theorem shows that the performance of the rounded caching decision is close to the optimal one for the single-BS caching subproblem.
Let $s^{max}=\max_{a,k}s_k^a$ denote the maximum size of a computation result.
\begin{theorem}[Asymptotic optimality]
    Under the stability assumption in Theorem~\ref{theorem:optimality}, let $J_n^{\rm rel}>0$ and $J_n^{\rm opt}$ be the optimal values of the relaxed and binary caching subproblems, respectively.
    Let $\hat J_n=J_n(\hat{\bm{x}}_n)$ denote the objective value obtained by rounding down an exact relaxed optimum.
    Then
    \begin{equation}
        1\leq\frac{\hat J_n}{J_n^{\rm opt}}
          \leq1+\kappa_n\frac{(A+1)s^{max}}{C_n^s},
        \label{eq:rounding_relative_bound}
    \end{equation}
    where $\kappa_n$ is the normalized marginal sensitivity defined in Appendix~\ref{appendix:3}.
    If $\kappa_n$ remains uniformly bounded, the obtained solution is asymptotically optimal as $s^{max}/C_n^s\to0$.
    \label{theorem:performance_gap}
\end{theorem}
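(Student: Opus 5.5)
The plan is a sandwich: relate $\hat J_n$ to $J_n^{\rm rel}$, and use $J_n^{\rm rel}\leq J_n^{\rm opt}$. The lower bound $1\leq \hat J_n/J_n^{\rm opt}$ is immediate. Rounding down preserves the storage constraint, so $\hat{\bm{x}}_n$ is feasible for the binary subproblem \eqref{subproblem:caching}. Every binary feasible point is also feasible for $P_n^c$, which gives
\begin{equation*}
J_n^{\rm rel}\leq J_n^{\rm opt}\leq \hat J_n .
\end{equation*}
It therefore suffices to show $\hat J_n - J_n^{\rm rel}\leq \kappa_n (A+1)s^{max}J_n^{\rm rel}/C_n^s$. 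Dividing by $J_n^{\rm opt}\geq J_n^{\rm rel}>0$ then yields \eqref{eq:rounding_relative_bound}.

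The first step is to count the fractional entries of an exact relaxed optimum $\bm{x}_n^*$, using the structure produced by Theorem~\ref{theorem:optimality} and Algorithm~\ref{alg:g}. For $k\in\mathcal K_1^a$ the objective is linear in $x_{nk}^a$, so the entry is binary, except possibly on ties with $B^*$. Ties can be broken to binary values without changing the objective, so we may choose such an optimum. For $k\in\mathcal K_0^a$, the reduction to \eqref{eq:cache_scalar_problem} shows that at most one index $\pi_i$ per application is fractional. This gives at most $A$ fractional entries from the per-application structure. One further fractional entry may arise from the capacity coupling: at the threshold $B^*$, a tie among $\mathcal K_1^a$ entries across applications can be resolved to a vertex of the knapsack face with at most one fractional coordinate. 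In total there are at most $A+1$ fractional entries, so $\|\bm{x}_n^*-\hat{\bm{x}}_n\|$ weighted by sizes satisfies
\begin{equation*}
\sum_{a,k}s_k^a\,(x_{nk}^{a,*}-\hat x_{nk}^a)\leq (A+1)s^{max}.
\end{equation*}

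The second step bounds the objective increase from removing this storage. Write $J_n(\hat{\bm{x}}_n)-J_n(\bm{x}_n^*)$ as a line integral along the segment from $\bm{x}_n^*$ down to $\hat{\bm{x}}_n$. Each integrand term is $-\widetilde\epsilon_{nk}^a\,\Delta\sigma_{nk}^a$, where $\Delta\sigma$ is the removed storage. Hence the gap is at most the removed storage times the largest magnitude of the marginal storage efficiency along that segment. Stability is preserved along the segment because stability holds at $\bm{x}_n=\bm 0$ and hit rates only decrease monotonically toward, but stay above, those at $\bm 0$. Convexity then keeps the derivatives finite.

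I expect the definition of $\kappa_n$ in Appendix~\ref{appendix:3} to be the normalization that makes this step work, namely
\begin{equation*}
\kappa_n=\sup_{\text{segment}}\max_{a,k}\frac{|\widetilde\epsilon_{nk}^a|\,C_n^s}{J_n^{\rm rel}}.
\end{equation*}
With this, the gap is at most $\kappa_n(A+1)s^{max}J_n^{\rm rel}/C_n^s$, which is exactly the required bound. The asymptotic statement follows trivially once $\kappa_n$ is uniformly bounded.

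The main obstacle is the fractional-entry count. Ties in $\widetilde\epsilon$ and the coupling between the per-application scalar problems and the global capacity multiplier could in principle produce more fractional coordinates. My first approach is a vertex/perturbation argument: among optimal relaxed solutions, take one minimizing the number of fractional entries. If two entries that are either in the same linear class or at the same $B^*$ are both fractional, shift storage between them along a direction that keeps the storage level and the hit rates fixed. The objective stays constant, since it is linear in that direction, until one of the two becomes binary. This contradicts minimality.
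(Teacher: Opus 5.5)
Your overall architecture matches the paper's: the sandwich $J_n^{\rm rel}\le J_n^{\rm opt}\le\hat J_n$, an optimum with at most $A+1$ fractional entries, and a loss bounded by marginal efficiency times removed storage. However, the fractional-entry count, which you correctly flag as the crux, is not actually proved.

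\textbf{The exchange argument fails.} Your fallback shifts storage between \emph{two} fractional entries while keeping storage and hit rates fixed, and such a two-coordinate direction generally does not exist. Take one entry in $\mathcal K_0^a$ and another in $\mathcal K_0^{a'}$ ($a'\neq a$) or in some $\mathcal K_1^{a'}$. Preserving $h_a$ forces the first displacement to vanish, and preserving $h_{a'}$ (or $S$) then forces the second to vanish. For two entries in the same $\mathcal K_0^a$, you would need $p_{k_1}^a/s_{k_1}^a=p_{k_2}^a/s_{k_2}^a$. The argument also proves too much. By Theorem~\ref{theorem:optimality} every fractional entry has $\widetilde\epsilon_{nk}^a=B^*$, so your condition ``at the same $B^*$'' always holds, and the argument would leave at most one fractional entry overall. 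That is false in general, since each application whose optimal $h_a$ is not a subset sum of its popularities needs its own fractional entry.

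\textbf{The primary count is incomplete.} When $S(g(\cdot))$ jumps across $C_n^s$ at $B^*$, the relaxed optimum need not be an output of Algorithm~\ref{alg:g}. The extra fractional entry can then come from tied keys $\theta_k^a(B^*)$ inside some $\mathcal K_0^a$, not only from ties in $\mathcal K_1^a$. Moreover, when $B^*<0$ a tied entry cannot simply be set to $0$ or $1$: this either increases $J_n$ by $B^*s_k^a\Delta x_{nk}^a>0$ or violates the tight capacity constraint.

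\textbf{The missing idea} is to move all fractional coordinates jointly. Fix each $h_a$ at its optimal value $h_a^*$, so that all nonlinear terms of $J_n$ are frozen and $J_n$ is linear on $\{\bm 0\le\bm x_n\le\bm 1,\ h_a(\bm x_n)=h_a^*\ \forall a,\ S(\bm x_n)\le C_n^s\}$. An optimal extreme point of this LP has at most $A+1$ fractional coordinates, because there are only $A+1$ non-box constraints, and it remains optimal for $P_n^c$. Equivalently, if $|\mathcal F_n|>A+1$, the $(A+1)\times|\mathcal F_n|$ system for hit rates and storage has a nonzero kernel vector supported on $\mathcal F_n$, along which $J_n$ is constant. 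This is exactly the paper's argument.

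\textbf{Your second step proves the bound for a different constant.} The argument is valid, but the paper defines $\kappa_n=M_nC_n^s/J_n^{\rm rel}$ with $M_n=\max_{(a,k)\in\mathcal F_n}[-\widetilde\epsilon_{nk}^a(\hat{\bm x}_n)]^+$. The gradient is evaluated only at the rounded point and only on the fractional support. To reach this you need convexity in the form $\hat J_n-J_n^{\rm rel}\le-\nabla J_n(\hat{\bm x}_n)^T(\bm x_n^\circ-\hat{\bm x}_n)$. Convexity makes the directional derivative along your segment nondecreasing, so the integral is dominated by its value at $\hat{\bm x}_n$. Then $\bm x_n^\circ-\hat{\bm x}_n\ge\bm 0$ lets you replace $-\widetilde\epsilon$ by $[-\widetilde\epsilon]^+$. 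Your segment supremum of $|\widetilde\epsilon|$ over all $(a,k)$ is never smaller than $M_n$, so as written you obtain a weaker statement than the one claimed. Also, finiteness of the derivatives along the segment comes from stability being monotone in the hit rate, not from convexity.
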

\begin{proof}
   Please see Appendix~\ref{appendix:3}.
\end{proof}

Notice that the above algorithm and theoretical analysis only apply to the caching subproblem at a single BS. 
To obtain network-wide caching decisions, we iteratively optimize the caching decision at each BS while holding those at all other BSs fixed. 
Since the solution involves numerical approximation and rounding, the resulting caching decision is not guaranteed to decrease the original frame-level objective in \eqref{subproblem:frame}. 
We therefore introduce an acceptance criterion to only accept those rounded candidates
that decrease the frame-level objective.

%Based on such observation, we can prove that the optimal caching decisions satisfies the following property:
%\begin{lemma}
    %For any given \textcolor{red}{caching decisions $\bm{x}_{-n}$ on other BSs},
    %there exists an optimal solution $\bm{x}^{*}_{n} = (x^{a,*}_{nk})_{a\in\mathcal{A}, k\in\mathcal{K}^a}$ of \textcolor{red}{$P^c_n$} such that for each application $a$,
    %there is at most one typical input $k'$ whose caching decision is fractional, i.e. $x^{a,*}_{nk} \in \{0,1\}$ for all $k \ne k'$.
    %\label{lemma:optimal_scheduling}
%\end{lemma}
% lemma 1 shows that the resulted optimal solution can be easily transformed to a binary solution and the performance is close

\subsection{Scheduling Subproblem} \label{subsection:scheduling}
Compared to the caching subproblem, the scheduling subproblem \eqref{subproblem:scheduling} is much easier because the decision
variables are real numbers so gradient-based methods can be used.
The same scheduling algorithm is used to optimize the sample-specific scheduling decisions in problem \eqref{subproblem:frame}
and the actual scheduling decisions in each slot.
Therefore, we consider a generic workload state and omit the frame, slot, and sample indices in this subsection.
Since \eqref{subproblem:scheduling} is a constrained optimization problem, we can apply the Projected Gradient Descent (PGD) algorithm
to find a local minimum.
Compared to the normal gradient descent, PGD takes an extra operation that projects the updated solution to the feasible
region so that all constraints are satisfied.
In our problem, let $(\bm{\lambda}^i, \bm{f}^i)$ be the obtained solution after the $i$-th iteration step, then the next update step is
\begin{equation*}
    (\bm{\lambda}^{i+1}, \bm{f}^{i+1}) = Proj\left[ (\bm{\lambda}^i,\bm{f}^i) - \theta^i \nabla D(\bm{\lambda}^i, \bm{f}^i) \right],
\end{equation*}
where $Proj[\cdot]$ is the projection operator that maps the input to the closest point in the feasible region, $\theta^i$ is the learning rate, 
and $D(\bm{\lambda}, \bm{f}) = \sum_{a\in\mathcal{A}} \phi^a D^a$ is the objective of \eqref{subproblem:scheduling}.

To implement the projection operator, we first replace $f^a_n$ with $\hat{f}^a_n = f^a_n / C^c_n$.
Moreover, the system response time is decreasing with respect to $f^a_n$ so we can assume constraint \eqref{lmp:computing} is tight.
Therefore, constraint \eqref{lmp:computing} can be equivalently expressed as 
\begin{equation}
    \sum_{a\in\mathcal{A}} \hat{f}^a_n = 1, \quad \forall n.
    \label{cons:computing_eq}
\end{equation}
Combining \eqref{cons:scheduling}, \eqref{cons:computing_eq}, and \eqref{lmp:variable_ylf} we know that the feasible regions
of $(\lambda^a_n)_{n\in\mathcal{N}}$ and $(\hat{f}^a_n)_{a\in\mathcal{A}}$ are both probability simplex~\cite{krstovski2013efficient}.
According to \cite{wang2013projection}, projection to the probability simplex can be computed in $O(d\log d)$ time,
where $d$ is the dimension of the probability simplex.
Hence, in our problem, the time complexity of the projection operator is $O(NA\log(NA))$.

After implementing the projection operator, we still need to 1) decide the learning rate $\theta^i$ and 
2) guarantee the obtained solution satisfies constraint \eqref{cons:stability}.\footnote{Recall that we have relaxed this constraint previously.}
The two issues can be simultaneously addressed by the backtracking line search \cite{armijo1966minimization}.
Intuitively, the backtracking method starts with a relatively large step size and then iteratively shrinks it by a contraction
factor until a sufficient decrease of the objective function is observed.
Since our feasible region is convex, the line between $(\bm{\lambda}^i, \bm{f}^i)$ and $(\bm{\lambda}^{i+1}, \bm{f}^{i+1})$
also lies in the feasible region.
Therefore, we can first choose a relatively large $\theta^i$ and then backtracking along the direction
$\bm{p}^i = (\bm{\lambda}^{i+1}, \bm{f}^{i+1}) - (\bm{\lambda}^i, \bm{f}^i)$.
Let $\alpha\in(0,1)$ and $\beta\in(0,1)$ be the control parameters and $j=0$ be the iteration counter, 
then the standard backtracking algorithm repeatedly increments $j$ until
\begin{equation}
    D(\bm{\lambda}^i, \bm{f}^i) - D((\bm{\lambda}^i, \bm{f}^i) + \beta^j \bm{p}) \geq - \alpha \beta^j \nabla D(\bm{\lambda}^i, 
    \bm{f}^i)^T \bm{p}.
    \label{ineq:backtracking}
\end{equation}
Theoretical analysis guarantees that such $j$ always exists and is denoted by $j_1$.
In addition to the decrease of objective value, an extra requirement of our problem is that the obtained solution 
$(\bm{\lambda}^i, \bm{f}^i) + \beta^j \bm{p}$ should also satisfy constraint \eqref{cons:stability}.
Let $\mathcal{S}$ be the set of $(\bm{\lambda}, \bm{f})$ that satisfies \eqref{cons:stability}.
Notice that $\mathcal{S}$ is an open set, so there exists a neighborhood $U$ of 
$(\bm{\lambda}^i, \bm{f}^i)$ such that $U \subseteq \mathcal{S}$.
Moreover, since $\beta \in (0,1)$,
we can always find a sufficiently large $j_2$ such that for all $j \geq j_2$ we have $(\bm{\lambda}^i, \bm{f}^i) + \beta^j \bm{p} \in U$.
Let $j^* = \max(j_1, j_2)$, then $(\bm{\lambda}^i, \bm{f}^i) + \beta^{j^*} \bm{p}$ satisfies \eqref{ineq:backtracking}
and \eqref{cons:stability} simultaneously.
According to \eqref{eq:delay_no_cache}, \eqref{delay:cache} and \eqref{eq:D^a_n},
$D^a_n$ increases to infinity when $(\bm{\lambda}, \bm{f})$ approaches the boundary of $\mathcal{S}$.
Therefore, the evolution trace of $(\bm{\lambda}^i, \bm{f}^i)$ will stay away from the boundary of $\mathcal{S}$
and makes $j_2$ very small in practice.
As a result, the backtracking algorithm usually terminates in several iterations if we choose a proper $\alpha$ and $\beta$.

After each PGD iteration, we update $\bm{y}$ according to \eqref{eq:optimal_y} and retain the update only when the objective value decreases.
This process terminates when $\bm{y}$ no longer changes or the maximum number of iterations is reached.
For the actual scheduling problem in slot $t>1$, we use $(\bm{\lambda}[t-1],\bm{f}[t-1])$ as the initial point.
Such a warm-start strategy substantially reduces the required number of iterations when the task arrival rates vary smoothly across consecutive slots.

\subsection{Online Implementation and Analysis}

The complete two-timescale algorithm is summarized in Algorithm \ref{alg:two_timescale}.
At the beginning of each frame, we first estimate the task popularity and construct the workload sample set from recent observations.
The sample-specific scheduling decisions and caching decisions are then optimized alternately.
For each BS, Algorithms \ref{alg:caching} and \ref{alg:g} generate a binary candidate using the caching objective and the fixed previous-frame cache state.
The candidate is accepted only if it reduces the objective of problem \eqref{subproblem:frame}.
After the frame-level iteration terminates, the resulting caching decision is deployed and fixed within the frame.
In each slot, the current task arrival rates are observed and the scheduling problem is solved with a warm start.

\begin{algorithm}[t]
    \caption{Two-Timescale Online Algorithm}
    \label{alg:two_timescale}
    \begin{algorithmic}[1]
        \STATE Initialization: caching decision $\bm{x}[0]$ and feasible scheduling decisions;
        \FOR{$r\in\mathcal{R}$}
            \STATE Estimate $\bm{p}[r]$ and set $\mathcal{L}_r$ to the indices of the $L$ most recently observed slots;
            \STATE Fix $\bm{x}^{\mathrm{old}}=\bm{x}[r-1]$; initialize $\bm{x}^{0}[r]=\bm{x}^{\mathrm{old}}$ and $q=0$;
            \REPEAT
                \FOR{$\ell\in\mathcal{L}_r$}
                    \STATE Optimize $\bm{y}^{\ell}[r],\bm{\lambda}^{\ell}[r],\bm{f}^{\ell}[r]$ using PGD and backtracking;
                \ENDFOR
                \STATE $\bm{x}^{q+1}[r]=\bm{x}^{q}[r]$;
                \FOR{$n\in\mathcal{N}$}
                    \STATE Obtain $\hat{\bm{x}}_n$ using Algorithm~\ref{alg:caching} with $\bm{x}_n^{\mathrm{old}}$ fixed;
                    \STATE Evaluate the candidate after updating $\{\bm{y}^{\ell}[r]\}_{\ell\in\mathcal{L}_r}$ using \eqref{eq:optimal_y};
                    \IF{the candidate reduces the objective of problem \eqref{subproblem:frame}}
                        \STATE Replace the caching decisions at BS $n$ in $\bm{x}^{q+1}[r]$ with $\hat{\bm{x}}_n$;
                        \STATE Retain the updated sample-specific cache-searching decisions;
                    \ENDIF
                \ENDFOR
                \STATE $q=q+1$;
            \UNTIL{the objective converges or the maximum number of iterations is reached}
            \STATE Deploy $\bm{x}[r]=\bm{x}^{q}[r]$;
            \FOR{$t\in\mathcal{T}_r$}
                \STATE Observe $\bm{R}[t]$ and warm-start from the preceding slot;
                \STATE Optimize $\bm{y}[t],\bm{\lambda}[t],\bm{f}[t]$ using PGD and backtracking;
            \ENDFOR
        \ENDFOR
    \end{algorithmic}
\end{algorithm}

Next, we analyze the convergence of the proposed algorithm. At the frame level, the backtracking line search ensures that each scheduling update does not increase $\overline{D}[r]$, while the acceptance rule ensures that each caching update either decreases the frame-level objective or leaves the current decision unchanged. Since the objective is lower bounded, the resulting objective sequence converges, and the algorithm reaches a fixed point when no further scheduling or caching update yields an improvement. 
% This fixed point is not necessarily globally optimal due to the non-convexity of the problem and the binary reconstruction of caching decisions.
At the slot level, for each fixed cache-searching mode, the alternating updates of $\bm{y}[t]$, $\bm{\lambda}[t]$, and $\bm{f}[t]$, together with PGD and backtracking, generate a non-increasing objective sequence. Upon convergence, the resulting solution is stationary with respect to the continuous variables and locally optimal with respect to $\bm{y}[t]$.

Let $I_f$ and $I_s$ denote the maximum numbers of frame-level AM iterations and scheduling PGD iterations, respectively,
and let $C_c$ be the caching-update complexity derived in Section~\ref{subsection:caching}.
The complexity of the frame-level planning process is
$O(I_f(NC_c+LI_sNA\log(NA)))$,
while the complexity of each slot-level update is $O(I_sNA\log(NA))$.

\section{Performance Evaluation} \label{section:simulation}

In this section, we evaluate the effectiveness of the proposed algorithm in adapting computation reuse to time-varying workloads through numerical simulations.

\subsection{Simulation Setup}

The proposed algorithm, denoted by Prop, is evaluated over a
24-hour trace comprising 48 frames. Each frame lasts 30 minutes and contains
$T=60$ slots of 30 seconds. Unless otherwise specified, the network consists
of $N=10$ BSs serving $A=5$ applications. Table~\ref{tab:sim_param}
summarizes the default parameters. %, whose ranges follow \cite{al2022promise,al2022reservoir}. 
Here, $\mathcal{U}[a,b]$ denotes a
uniform distribution, and $\mathcal{N}(\mu,\sigma)$ denotes a normal
distribution with mean $\mu$ and standard deviation $\sigma$.

We model temporal load variations as correlated multiplicative fluctuations around the baseline per-BS arrival rates. Specifically, we generate a stationary first-order autoregressive process~\cite{brockwell2016introduction},
$z_n^a[t]=\phi z_n^a[t-1]+\epsilon_n^a[t]$, where $\phi=0.95$ and $\epsilon_n^a[t]$ denotes an independent zero-mean Gaussian noise. 
The stationary variance is set to $\sigma_z^2=\ln(1+0.2^2)$ by choosing the noise variance as $(1-\phi^2)\sigma_z^2$. We then define the positive load factor as
\begin{equation*}
    m_n^a[t]=\exp\!\left(z_n^a[t]-\frac{\sigma_z^2}{2}\right),
\end{equation*}
such that $\mathbb{E}[m_n^a[t]]=1$ and its coefficient of variation is $\sqrt{\exp(\sigma_z^2)-1}=0.2$. Each baseline arrival rate is scaled by $m_n^a[t]$ to obtain the corresponding time-varying load. To capture abrupt changes in demand intensity and spatial distribution, we introduce load bursts and a $30\%$ redistribution of BS-side traffic at frames 17 and 33.

Input popularity evolves independently of the arrival-rate process. Reusable inputs account for $0.8$ of the total probability mass and follow a Zipf distribution with an exponent drawn from $\mathcal{U}[0.8,1.2]$. For each application, the top $10\%$ of inputs constitute a persistent hot set. To capture both gradual popularity drift and abrupt changes in hot-set membership, we replace a fraction $\delta_p=0.075$ of the hot-set members in each ordinary frame and increase this fraction to $0.3$ at frames 17 and 33.

\begin{table}[tb]
\renewcommand{\arraystretch}{1.1}
\caption{Default Simulation Parameters}
\label{tab:sim_param}
\centering
% \scriptsize
\begin{tabularx}{0.8\columnwidth}{Xc}
\hline
\textbf{Parameter} & \textbf{Value}\\
\hline
Computational capacity $C^c_n$ & $\mathcal{U}[2,8]$ GHz \\
Storage capacity $C^s_n$ & $\mathcal{U}[1,4]$ GB \\
Number of typical inputs $K^a$ & $\mathcal{U}[10\mathrm{K},50\mathrm{K}]$ \\
Task workload $w^a$ & $\mathcal{U}[200,600]$ Mcycles \\
Cache-search workload $w^s$ & $25$ Mcycles \\
Arrival rate $R^a_n$ & $\mathcal{U}[0.5,1.5]$ tasks/s \\
% NoC arrival-rate scale & $0.5$ \\
Result size $s^a_k$ & $\mathcal{N}(0.2,0.06)$ MB \\
Transmission delay $D^t_n$ & $\mathcal{U}[10,30]$ ms \\
% Slot/frame/experiment duration & $30$ s/$30$ min/$24$ h \\
% History length $(T,L)$ & $(60,60)$ slots \\
% AR coefficient/arrival CV & $0.95/0.2$ \\
% Ordinary/change hot-set replacement & $0.075/0.3$ \\
% Burst/BS-hotspot migration & $1.75/0.3$ \\
% Warm-up frames & $4$ \\
Update weight $\eta$ & $0.1$ \\
% AM/PGD iteration caps & $3/25$ \\
% Relative/bisection tolerance & $10^{-3}/10^{-9}$ \\
Backtracking parameters $(\alpha,\beta)$ & $(0.3,0.5)$ \\
% VaRe-C $(\epsilon,T_{\max})$ & $(0.1,12\ \text{frames})$ \\
\hline
\end{tabularx}
\end{table}

Prop estimates the demand from observations over the preceding $L=60$ slots. 
The frame-level AM and slot-level PGD procedures are limited to $3$ and $25$ iterations, respectively, with a relative-change stopping tolerance of $10^{-3}$. The first four frames are treated as a warm-up period and excluded from all aggregate performance metrics.
We compare Prop with the following policies:
\begin{itemize}
    \item \emph{Static}: the static policy from our conference
    version~\cite{he2024exploiting}, which determines the caching decisions
    using the initial observations and retains them throughout the experiment.
    \item \emph{VaRe-C}~\cite{zhu2026optimized}: a value-aware policy that
    ranks results by estimated reuse value and size, applies scaled dynamic
    programming to high-value results, and fills the remaining storage
    greedily. 
    % It discounts older results using Age of Data and proactively
    % refreshes selected results with idle computing resources. 
    % In our
    % implementation, result $(a,k)$ corresponds to a data block of size
    % $s_k^a$, with reuse saving $w^a/C_n^c+D_n^t$. VaRe-C replaces only the
    % frame-level cache policy and uses the same slot-level controller as
    % Prop.
    \item \emph{Look-ahead}: an oracle benchmark with access to the true task 
      popularity and future slot-level arrival rates within the current frame.
    \item \emph{No Reuse (NoR)}: adapts the peer-offloading mechanism
    in~\cite{chen2018computation} to our model, using the same slot-level
    scheduling algorithm as Prop while disabling computation reuse.
    \item \emph{No Collaboration (NoC)}: adapts the local computation reuse
    mechanism of CoxNet~\cite{bellal2021coxnet} to our model, restricting
    task processing and result reuse to the local BS. Each BS caches results
    by their popularity-to-size ratios and allocates computing resources
    in proportion to the local workload.
\end{itemize}
The default workload exceeds the processing capacity of NoC. We therefore evaluate this policy under a reduced workload of $0.5R_n^a[t]$ throughout the experiments.
Each setting is evaluated over 30 random seeds. 
The lines show the mean results, 
while the shaded regions indicate the 95\% confidence intervals.

% The primary metrics are the time-average weighted response time $\bar D$
% (in seconds), normalized cache update cost $\bar U$, and
% objective $\bar J=\bar D+\eta\bar U$.
% We also report newly written cache data and the fraction of unstable
% application--BS queues. Each setting is evaluated using 30 independent
% random seeds; policies at a given setting and seed share the same system
% instance and underlying demand trace, subject to the NoC load scaling.
% Within each algorithm--setting group, a run is excluded from aggregation
% if its post-warm-up unstable-queue fraction is nonzero. Reported means and
% approximate 95\% confidence intervals are computed from the remaining
% stable runs, using $1.96$ times the across-seed standard error for the
% interval half-width. The retained sample size $n$ is reported because it
% may differ across policies and settings. These statistics characterize
% performance conditional on queue stability; differences between retained
% seed sets are considered when interpreting the results.

\subsection{Convergence Behavior and Running Time}

\begin{figure}[t]
\centering
\subfloat[Frame-level AM.\label{fig:convergence_frame}]{%
\includegraphics[width=0.49\columnwidth]{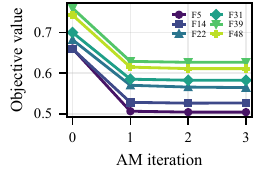}}\hfill
\subfloat[Slot-level PGD.\label{fig:convergence_slot}]{%
\includegraphics[width=0.49\columnwidth]{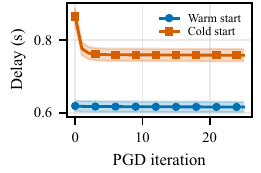}}
\caption{Convergence of the proposed algorithm.}
\label{fig:validation_combined}
\end{figure}

Fig.~\ref{fig:validation_combined} illustrates the convergence behavior for a representative simulation run. 
Specifically, Fig.~\ref{fig:convergence_frame} shows the surrogate objective for frames $\{5,14,22,31,39,48\}$, uniformly sampled from the evaluated portion of the trace. The objective decreases by 16.8\%--23.7\%, with most of the reduction achieved by the first AM iteration.

Fig.~\ref{fig:convergence_slot} compares warm and cold starts for slot-level PGD. The mean weighted response time decreases from 0.6190 to 0.6162~s with warm starts, compared with 0.8647 to 0.7584~s with cold starts. Under the prescribed iteration budget, warm starting achieves an 18.8\% lower final response time, demonstrating the benefit of reusing the preceding solution to provide a better initialization and improve finite-iteration performance.

Across the 30 runs, Prop requires an average of 11.44~s per 30-minute frame, including all 60 slot-level updates, whereas VaRe-C requires 1.23~s per frame with the same slot-level controller. Although Prop incurs higher optimization overhead, its average runtime remains negligible relative to the frame duration. Moreover, the current implementation is in Python, which can be substantially slower than lower-level languages such as C/C++ for computationally intensive workloads~\cite{benchmark}. Further runtime reductions can therefore be expected through optimized implementations in practical deployments.

\subsection{Performance under Dynamic Workloads}

\begin{table}[tb]
\renewcommand{\arraystretch}{1.1}
\caption{Performance and Ablation Results}
\label{tab:overall_performance}
\centering
\scriptsize
\begin{tabularx}{0.7\columnwidth}{Xrrr}
\hline
Algorithm & $\bar D$ (s) & $\bar U$ & $\bar J$ (s) \\
\hline
Prop & 0.9971 & 1.1649 & 1.1136 \\
VaRe-C & 1.1714 & 5.5558 & 1.7270 \\
Static & 1.9450 & 0 & 1.9450 \\
Look-ahead & 0.8109 & 0.1010 & 0.8210 \\
NoR & 4.4519 & 0 & 4.4519 \\
NoC & 2.2428 & 6.2571 & 2.8685 \\
\hline
$L=1$ & 1.2970 & 0.1459 & 1.3116 \\
Neglect update cost & 1.0278 & 5.7013 & 1.5979 \\
Cold-start & 1.5352 & 1.1649 & 1.6517 \\
\hline
\end{tabularx}
\end{table}

\begin{figure}[t]
\centering
\includegraphics[width=0.9\columnwidth]{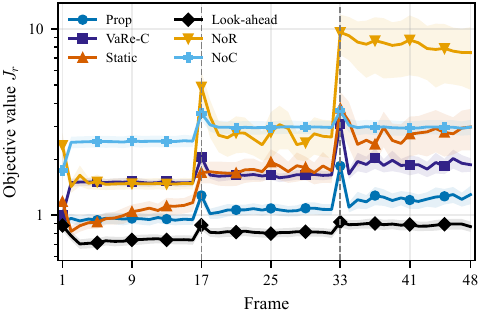}
\caption{Per-frame objective under dynamic demand.}
\label{fig:dynamic_response}
\end{figure}

We next evaluate the benefit of adapting computation reuse to time-varying demand. Table~\ref{tab:overall_performance} summarizes the aggregate performance, while Fig.~\ref{fig:dynamic_response} shows its temporal evolution on a logarithmic scale. Prop achieves $\bar D=0.9971$~s and $\bar J=1.1136$, reducing the mean objective by 35.5\%, 42.7\%, and 75.0\% compared with VaRe-C, Static, and NoR, respectively. These results demonstrate the benefit of jointly exploiting computation reuse, inter-BS collaboration, and dynamic cache adaptation under time-varying demand.

Even with its arrival rate halved, NoC yields $\bar D=2.2428$~s and $\bar J=2.8685$, highlighting the importance of inter-BS collaboration. In contrast, Look-ahead achieves the lowest objective of $0.8210$. The remaining gap between Prop and Look-ahead suggests that more accurate demand information could further improve performance.

The comparison with VaRe-C further highlights the importance of accounting for cache update costs. Although VaRe-C achieves a response time of $1.1714$~s, it incurs $14.09$~GB of cache writes per evaluated frame, compared with only $2.94$~GB for Prop. Consequently, its mean objective increases to $\bar J=1.7270$ under the common objective. By explicitly balancing the estimated response-time benefit of each cache update against its write cost, Prop achieves a lower overall objective while substantially reducing cache writes.

\subsection{Load and Popularity Sensitivity}
\begin{figure}[t]
\centering
\subfloat[Load sensitivity.\label{fig:load_sensitivity}]{%
\includegraphics[width=0.9\columnwidth]{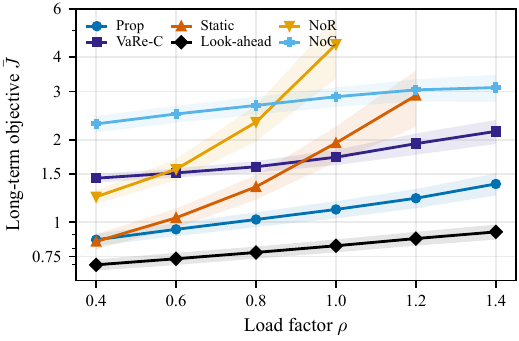}}\\
\subfloat[Popularity sensitivity. \label{fig:drift_sensitivity}]{%
\includegraphics[width=0.9\columnwidth]{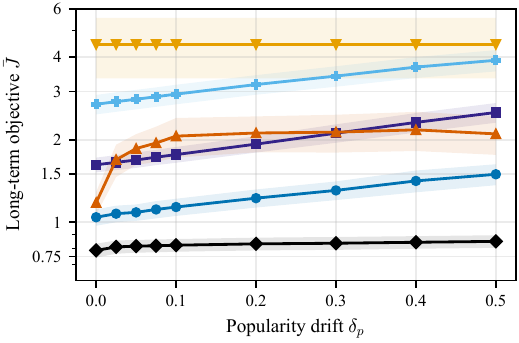}}
\caption{Sensitivity to workload variations.}
\label{fig:load_drift}
\end{figure}

This subsection evaluates the algorithms under varying load and popularity conditions. As shown in Fig.~\ref{fig:load_sensitivity}, performance generally degrades as the load increases, while Prop achieves the lowest objective among the evaluated causal algorithms. NoR and Static are particularly sensitive to increasing load. Their task arrival rates exceed the processing capacity in most runs at $\rho\geq1.2$ and $\rho=1.4$, respectively, and the corresponding results are therefore omitted. The widening gap between NoR and Prop demonstrates the increasing benefit of computation reuse under heavier workloads, whereas the degradation of Static shows that this benefit can be substantially diminished without timely cache adaptation to changing demand.

Fig.~\ref{fig:drift_sensitivity} varies the hot-set replacement fraction $\delta_p$ while retaining the 30\% migrations at frames 17 and 33 and the original load dynamics. As $\delta_p$ increases, the objectives of Prop, VaRe-C, and Static generally increase. Static exhibits a distinct saturation behavior: at low drift rates, initially popular results remain popular for longer, allowing the fixed cache to remain effective. As the drift increases, this persistence diminishes and the benefit of the initial cache decreases. Once the current hot set becomes weakly correlated with the initial one, however, further increasing the turnover rate no longer systematically reduces the overlap between the fixed cache and current demand, causing Static's performance to level off.
Look-ahead is only weakly affected by popularity drift because it adapts its caching decisions using the true popularity in each frame, whereas NoR is insensitive to popularity changes because it does not exploit cached results.

\subsection{Impact of Network Size}
\begin{figure}[t]
\centering
\includegraphics[width=0.9\columnwidth]{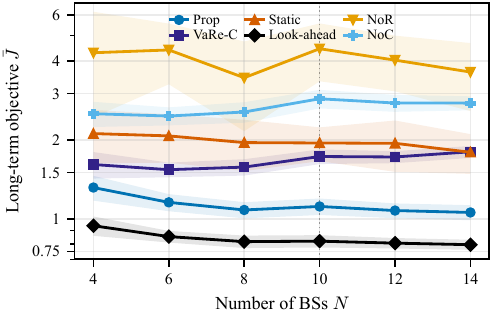}
\caption{Mean objective under different numbers of BSs.}
\label{fig:network_scale}
\end{figure}

Fig.~\ref{fig:network_scale} evaluates the impact of network size. A larger network provides more aggregate storage and enables BSs to access a wider range of cached results through collaboration. Consequently, Look-ahead, Prop, and Static generally achieve better performance as the network grows.
VaRe-C does not exhibit the same improvement because its caching decisions do not account for update costs. Although its mean response time decreases from 1.3568~s to 1.0804~s, the mean update cost increases from 2.5288 to 7.1976 as more BSs update their caches, raising the mean objective from 1.6097 to 1.8002. NoC, which restricts computation reuse and task processing to the local BS, cannot exploit the additional caching or computing resources provided by a larger network. Its response time is therefore relatively insensitive to network size, while its network-wide update cost increases with the number of BSs.
NoR exhibits greater variability because, without computation reuse, some BSs operate close to their processing capacity. In this regime, small capacity--load imbalances can cause sharp increases in queueing delay, allowing a few high-delay runs to substantially increase the mean response time and widen the confidence intervals.

\subsection{Two-Timescale Design and Ablation Study}

The ablations in Table~\ref{tab:overall_performance} assess the effects of demand history, cache update costs, and slot-level initialization. Using $L=60$ instead of $L=1$ reduces $\bar J$ from 1.3116 to 1.1136, achieving a 15.1\% improvement, 
which demonstrates the benefit of incorporating longer demand history. 
The \emph{Neglect update cost} variant sets $\eta=0$ during caching optimization while being evaluated under the same objective as Prop. This increases $\bar U$ from 1.1649 to 5.7013 and $\bar J$ to 1.5979, highlighting the importance of explicitly accounting for cache update costs. Finally, cold-start slot optimization increases $\bar D$ to 1.5352~s and $\bar J$ to 1.6517 without affecting the update cost, which shows the benefit of warm starts under a finite iteration budget.

\begin{figure}[t]
\centering
\includegraphics[width=0.9\columnwidth]{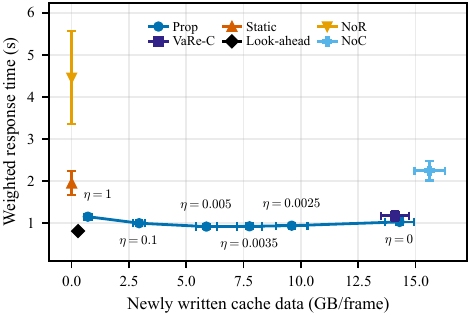}
\caption{Tradeoff between response time and cache writes.}
\label{fig:pareto}
\end{figure}

Fig.~\ref{fig:pareto} illustrates the tradeoff between response time and cache writes as the update penalty $\eta$ varies. We select $\eta\in\{0,0.0025,0.0035,0.005,0.1,1\}$ to provide approximately uniform coverage along the cache-write axis. Increasing $\eta$ consistently reduces cache writes, whereas the response time varies nonmonotonically, reflecting the discrete caching decisions and finite iteration budget of the alternating optimization. At $\eta=1$, the substantial reduction in cache writes is accompanied by increased response time, indicating insufficient adaptation to demand variations. The default setting $\eta=0.1$ achieves an objective 42.7\% lower than that of Static, demonstrating the benefit of balancing dynamic cache adaptation against its update cost.

\begin{figure}[t]
\centering
\includegraphics[width=0.9\columnwidth]{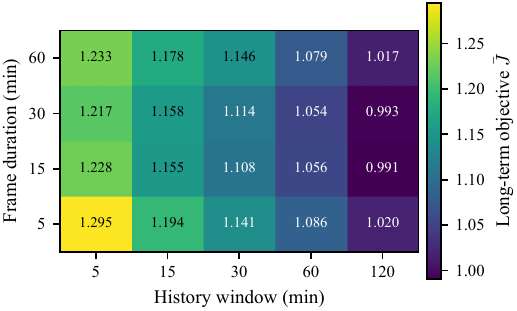}
\caption{Joint effect of frame and history-window durations.}
\label{fig:timescale_heatmap}
\end{figure}

% The ordinary-frame migration rate is adjusted to preserve 7.5\% hot-set
% turnover per 30 minutes, ensuring comparability in physical time.
Fig.~\ref{fig:timescale_heatmap} evaluates 20 combinations of frame and history-window durations. All combinations share the same 24-hour workload, a 120-minute warm-up period, and demand changes at hours 8 and 16. Within the tested range, the history-window duration has a greater impact on performance than the frame duration. 
The best combination, using a 15-minute frame and a 120-minute history, reduces the objective by 11.1\% relative to the default setting of a 30-minute frame and a 30-minute history.
Under the persistent hot-set workload, the improvement with longer histories suggests that the benefit of reduced estimation variability outweighs the effect of increased information staleness. 
In contrast, performance varies modestly with different frame durations, suggesting that the algorithm is relatively insensitive to the cache-update interval under the tested demand dynamics.

\section{Conclusions} \label{section:conclusion}
In this paper, we have proposed an online two-timescale computation reuse algorithm that coordinates caching and scheduling to balance response time and cache update cost in CEC networks.
Using recent workload observations, the algorithm uses bisection guided by marginal storage efficiency for caching and projected gradient descent for scheduling.
Our analysis establishes near-optimality for single-BS caching when result sizes are small relative to cache capacity.
Simulation results demonstrate that our algorithm adapts to dynamic demand 
and outperforms benchmark methods under various settings,
highlighting the benefits of coordinating computation reuse across timescales while accounting for update costs.

\appendices

\section{Proof of Theorem \ref{theorem:np}} \label{appendix:1}
    In the considered special case, we have only one BS and one application, so there is no need to offload tasks among BSs and share computational power among applications.
    As a result, the decision variables $\lambda^a_n$ and $f^a_n$ are redundant.
    In a stable system, the total task arrival rate should be lower than the total processing rate.
    This condition reduces to constraint \eqref{cons:stability} when $N=1$ so this constraint is also redundant.
    We further assume that the benefit of cache-searching is greater than the induced cost so we have $y^a_n=1$ in the optimal solution.
    In this case, it can be easily verified that the response time is optimized when we achieve the maximum hit rate.
    Therefore, the problem \eqref{lmp} is equivalent to the following problem:
    \begin{align*}
        \max\quad &P^{a}_{hr} = P^{a,n}_{lhr} = \sum_{k\in\mathcal{K}^a} x^a_{nk}p^a_k \\
        s.t.\quad &\sum_{k\in\mathcal{K}^a} x^a_{nk}s^a_k \leq C^s_n \\
        &x^a_{nk} \in \{0,1\}, \quad \forall k.
    \end{align*}
    The above problem is a standard 0-1 knapsack problem and is known to be NP-hard.
    
\section{Proof of Theorem \ref{theorem:optimality}} \label{appendix:2}
The stability assumption at $\bm{x}_n=\bm{0}$ ensures that the affine extension of the delay is finite and differentiable throughout the unit box.
The delay component is convex, as established in Section~\ref{subsection:caching}, and adding the linear update cost preserves convexity.
For $C_n^s>0$, the relaxed feasible set admits a strictly interior point, so the KKT conditions are necessary and sufficient.
Let $\nu\geq0$ be the capacity multiplier and $\alpha_{ak},\beta_{ak}\geq0$ the lower- and upper-bound multipliers.
Stationarity and complementarity give
\begin{align*}
    s_k^a\widetilde\epsilon_{nk}^a(\bm{x}_n^*)+\nu s_k^a
        -\alpha_{ak}+\beta_{ak}&=0,\\
    \alpha_{ak}x_{nk}^{a,*}=0,\quad
    \beta_{ak}(1-x_{nk}^{a,*})&=0,\\
    \nu\left(\sum_{a,k}s_k^a x_{nk}^{a,*}-C_n^s\right)&=0.
\end{align*}
Setting $B^*=-\nu$ yields \eqref{eq:cache_kkt_threshold} and \eqref{eq:cache_capacity_complementarity}.
% Conversely, these conditions and primal feasibility imply the KKT conditions, establishing optimality without requiring capacity saturation.

\section{Proof of Theorem \ref{theorem:performance_gap}} \label{appendix:3}
We establish the approximation bound by quantifying the objective increase caused by rounding down a relaxed optimum.
We first show that an exact relaxed optimum can be selected with at most $A+1$ fractional entries.
We then use convexity to bound the rounding loss in terms of the maximum result size and marginal sensitivity, yielding the relative bound in \eqref{eq:rounding_relative_bound}.

To select such an optimum, fix the additional hit rate $h_a$ of each application at its value in any relaxed optimum, which preserves all nonlinear terms in $J_n$.
The remaining optimization is linear over the unit box, subject to these $A$ hit-rate equalities and the capacity constraint.
An optimal extreme point therefore has at most $A+1$ fractional entries and remains optimal for the original relaxed subproblem.
Denote this solution by $\bm{x}_n^\circ$ and its fractional support by $\mathcal F_n$.

Let $\hat{\bm{x}}_n=\lfloor\bm{x}_n^\circ\rfloor$.
Rounding down preserves capacity feasibility and, under the stability assumption at $\bm{x}_n=\bm{0}$, stability, so $J_n^{\rm rel}\leq J_n^{\rm opt}\leq\hat J_n$.
Define
\begin{equation}
    M_n=\max_{(a,k)\in\mathcal F_n}
        [-\widetilde\epsilon_{nk}^a(\hat{\bm{x}}_n)]^+,
    \qquad \kappa_n=\frac{M_nC_n^s}{J_n^{\rm rel}},
    \label{eq:rounding_sensitivity}
\end{equation}
where $[u]^+=\max\{u,0\}$ and $M_n=0$ if $\mathcal F_n$ is empty.
Convexity and $\partial J_n/\partial x_{nk}^a=s_k^a\widetilde\epsilon_{nk}^a$ give
\begin{align}
    \hat J_n-J_n^{\rm rel}
    &\leq-\nabla J_n(\hat{\bm{x}}_n)^T
            (\bm{x}_n^\circ-\hat{\bm{x}}_n)\notag\\
    &\leq M_n\sum_{(a,k)\in\mathcal F_n}s_k^a x_{nk}^{a,\circ}
     \leq M_n(A+1)s^{max}.
    \label{eq:rounding_additive_bound}
\end{align}
Using $J_n^{\rm opt}\geq J_n^{\rm rel}>0$ yields \eqref{eq:rounding_relative_bound}, and the asymptotic result follows for fixed $A$ and uniformly bounded $\kappa_n$.

% Algorithm~\ref{alg:caching} has at most one fractional entry per application before rounding and requires no extreme-point selection.
% Applying the same argument to its feasible relaxed solution replaces $A+1$ with $A$ and adds $\Delta_n/J_n^{\rm rel}$ to the relative bound, with $M_n$ evaluated for that solution.

\bibliographystyle{IEEEtran}
\bibliography{ref}

\end{document}